\newcommand{\cal}{\mathcal }
\newcommand{\R}{{\mathbb R}}
\renewcommand{\a}{\alpha}
\newcommand{\pa}{\partial}
\def\be{\begin{equation}}
\def\ee{\end{equation}}
\def\bea{\begin{eqnarray}}
\def\eea{\end{eqnarray}}
\def\nn{\nonumber}
\def\o{\omega}
\def\bC {\mathbf{C}}
\def\bR {\mathbf{R}}
\def\fH {\mathfrak{H}}
\def\cD {\mathcal{D}}
\def\cE {\mathcal{E}}
\def\cL {\mathcal{L}}
\def\a {{\alpha}}
\def\rstr {{\big |}}
\newcommand{\Tr}{\operatorname{trace}}
\newcommand{\op}[1]{\overline{\underline{#1}}}
\newcommand{\ba}{\begin{aligned}}
\newcommand{\ea}{\end{aligned}}
\newtheorem{Thm}{Theorem}[section]
\newtheorem{Rmk}[Thm]{Remark}
\newtheorem{Prop}[Thm]{Proposition}
\newtheorem{Cor}[Thm]{Corollary}
\newtheorem{Lem}[Thm]{Lemma}
\newtheorem{Def}[Thm]{Definition}
\newcommand{\norm}[1]{\lVert #1 \rVert}
\newcommand{\marg}[1]{\mbox{Tr}^{#1}}
\newcommand{\margtracej}[3]{{#2}^{\otimes \{#1\}}_{#3}}
\newcommand{\margetracej}[3]{{[#2]}^{\otimes {#1}}_{#3}}
\newcommand{\lone}[1]{\lon^{\otimes #1}}
\newcommand{\lono}[1]{\lon^{\otimes #1}}
\newcommand{\lon}{\mathbb L}
\newcommand{\tr}{\mbox{Tr}}
\newcommand{\FF}{F}
\newcommand{\unN}{1,\dots, N}
\newcommand{\un}[1]{1,\dots, #1}
\newcommand{\suml}{\sum\limits}
\newcommand{\sumls}[1]{\suml_{\substack{#1}}}
\newcommand{\bcr}{\begin{color}{red}}
\newcommand{\bcb}{\begin{color}{blue}}
\newcommand{\ec}{\end{color}}
\newcommand{\ksq}{For the {\bf K}, {\bf S} and {\bf Q} models,\ }
\newcommand{\Pii}[2]{\underset{#1}{\overset{#2}{\Pi}}}
\begin{document}

\title[Asymptotic expansion of the mean-field approximation]{\LARGE Asymptotic expansion of the mean-field approximation}

\author[T. Paul]{\Large Thierry Paul}
\address[T.P.]{CMLS, Ecole polytechnique, CNRS, Universit\'e Paris-Saclay, 91128 Palaiseau Cedex, France}
\email{thierry.paul@polytechnique.edu}

\author[M. Pulvirenti]{\Large Mario Pulvirenti}
\address[M.P.]{Sapienza Universit\`a di Roma, Dipartimento di Matematica, Piazzale Aldo Moro 5, 00185 Roma}
\email{pulvirenti@mat.uniroma1.it} 


\maketitle
\LARGE
\begin{abstract}
We established and estimate the full asymptotic expansion in integer powers of $\frac1N$ of the $[\sqrt N]$ first marginals of  $N$-body systems evolution lying in a general paradigm containing Kac models and non-relativistic quantum evolution. We prove that the coefficients of the expansion are, at any time, explicitly computable given the knowledge of the linearization on the  one-body  meanfield kinetic limit equation. Instead of working directly with the corresponding  BBGKY-type hierarchy, we follows a method developed in \cite{PPS} for the meanfield limit, dealing with error terms analogue to the $v$-functions used in previous works. As a by-product we get that the rate of convergence to the meanfield limit in $\frac1N$ is optimal.
\end{abstract}

\LARGE

\section{Introduction: motivation  and main results}\label{into}
Mean field limit concerns systems of interacting (classical or quantum) particles whose  number diverges in a way linked with a rescaling of the interaction insuring an equilibrium between  interaction  and residual kinetic energies. In the case of an additive one-body kinetic energy part and a two-body interaction, and without taking in consideration quantum statistics, this equilibrium is reached by putting in front of the interaction a coupling constant proportional to the inverse of the number of particles.

The system is then described by isolating the evolution of one (or $j$) particle(s) and averaging over all the other. This leads to a partial information on the system driven by the so-called $j$-marginals. The mean field theory insures 
that the $j$-marginals tend, as the number of particles diverges, to the $j$-tensor power of the solution of a non-linear one-body meanfield equation (Vlasov, Hartree,...) issued from the $1$-marginal on the initial $N$-body state. This program has be achieved in many different situations, and the literature concerning the mean field approach is protuberant. We refer to the review article \cite{SSS} for a reasonable bibliography.

Much less is known about the fluctuations around this limit, namely the correction to be added  to the factorized limit  in order to get better approximations of the true evolution of the $j$-marginals. 

The identification of the leading order of these fluctuations with a Gaussian stochastic process has been established in the quantum context first in \cite{Hepplieb} and in the classical one in \cite{BraunHepp}. For the classical dynamics of hard spheres, the fluctuations around the Boltzmann equation have been computed at leading order in \cite{Spohnboltz}, generalizing to non-equilibrium states the results of \cite{LandSpohn}. More recently, for the quantum case, fluctuations near the Hartree dynamics  has been derived in \cite{LNS} ( after \cite{LNSS}) and in \cite{BPS} also for the grand canonical ensamble formalism (number of particles non fixed),  using in both cases the methods of second quantization (Fock space).
\vskip 1cm
Recently, we developed (together with S. Simonella) in \cite{PPS}  a method to derive mean field limit, alternative to the ones using empirical measures or direct estimates on the  ``BBGKY-type" hierarchies
(systems of coupled  equations satisfied by the set of the $j$-marginals).
This method
rather uses the hierarchy followed by the ``kinetic errors" $E_{j-k}$ (defined below),
 already used (under the name ``$v$-functions'') to deal with kinetic limits of stochastic models
\cite{ DP, CDPP91,  BDP,  DOPT,  DOPT1, CP, CPW,  DPTV} and  recently investigated in the more singular low density limit of hard spheres \cite{PS}.
These quantities are, roughly speaking, the coefficient of the  decomposition of the $j$-marginal as a linear combination of  the $k$th tensor powers, $k=1,\dots,j$, of the solution of the mean-field equation issued from of the  $1$-marginal of the initial full state.
We developed in \cite{PPS} a strategy suitable in particular for Kac models (homogeneous original one \cite{Kac,Kac2} and non-homogeneous \cite{Cer}) and quantum mean field theory. This strategy allowed us to derive the limiting factorization property of the $j$-marginals up to, roughly speaking, $j\lesssim \sqrt N$. This threshold is, on the other side, the  one obtained by heuristic arguments as shown in \cite{PPS}. 
 
 Here and in all this article, $N$ denotes the number of particles of the system under consideration.
\vskip 0.5cm

In the present note we  provide and estimate a full asymptotic expansion in powers  of $\frac1N$ of the difference between the evolution of $j$-marginals and its factorized leading order form (Corollary \ref{cormain}), following a similar result for the kinetic errors $E_j(t)$ (Theorem \ref{main}). Our results are valid for $j\lesssim \sqrt N$ in  an abstract paradigm,  generalization of the   abstract formalism developed in \cite{PPS} and
 described in Appendix \ref{abstract}, and applies of course to the different  Kac models  and quantum mean field theory treated in \cite{PPS}.
Moreover, we show that the additional knowledge of the linearization of the mean field flow,   around  the meanfield  solution issued from the $1$-marginal of the initial data, gives an explicit construction of the full asymptotic expansion of the  $j$-marginals in powers of $\frac1N$.

We will state in this section the quantum results and postpone in Section \ref{kac} and in the Appendix \ref{abstract} the Kac's type and the abstract results, respectively. Sections \ref{recur} and \ref{esti} contain the algebraic and anlytical proofs of our results in the quantum case, immediately transposable to the Kac and abstract situations as shown in Section \ref{kac} and Appendix \ref{abstract}. In Section \ref{explicit} one compute explicitly the first terms of the asymptotic expansions obtained in the quantum case and rely them to previous works.
\vskip 1cm
\subsection{Quantum mean-field}\label{quantum}

Let $\cL^1(L^2(\bR^d)$ be the space of trace class operators on $L^2(\bR^d)$, with their  associated norms.

We consider the evolution of a system of $N$ quantum particles interacting through a (real-valued) two-body, even potential $V$, described for any value of the Planck constant $\hbar>0$ by the Schr\"odinger equation 
$$
i\hbar\partial_t\psi=H_N\psi\,,\qquad\psi\rstr_{t=0}=\psi_{in}\in\fH_N:=L^2(\bR^d)^{\otimes N}\,,
$$
where 
$$
H_N:=-\tfrac12\hbar^2\sum_{k=1}^N\Delta_{x_k}+\frac1{2N}\sum_{1\leq k,l\leq N}V(x_k-x_l).
$$
We will suppose in the whole presnt paper that  the $N$-body Hamiltonian $H_N$  is essentially self-adjoint. 

Instead of the Schr\"odinger equation written in terms of wave functions, we shall rather consider the quantum evolution of density matrices. An $N$-body density matrix is an operator $F^N$ such that
$$
0\le F^N=(F^N)^*,\quad\Tr_{\fH_N}(F^N)=1\,.
$$

The evolution of the density matrix $F^N\mapsto F^N(t)$ of a $N$-particle system is governed for any value of the Planck constant $\hbar>0$ by the von Neumann equation
 \be\label{Ns}
\partial_tF^N=\frac1{i\hbar}[H_N,F^N],
\ee equivalent to the Schr\"odinger equation when $F^N(0)$ is a rank one projector, modulo a global phase.

\noindent Positivity, norm  and trace are obviously preserved by \eqref{Ns} since $H_N$ is taken essentially self-adjoint.
%
%

For each $j=1,\ldots,N$, the $j$-particle marginal $F^N_j(t)$ of $F^N(t)$ is the unique operator on $\fH_j$ such that
$$
\Tr_{\fH_N}[F^N(t)(A_1\otimes\dots\otimes A_j\otimes I_{\fH_{N-j}})]=\Tr_{\fH_j}[F^N_j(t)(A_1\otimes\dots\otimes  A_j)]\,.
$$
for all $A_1,\ldots,A_j$ bounded operators on $\fH$.
 Alternatively and equivalently,  the $F^N_j$ can be defined by the  partial trace 
 of $F^N$ on the $N-j$ last ``particles": defining $F^N$ through its integral kernel $F^N(
x_1,x'_1;\dots;x_N,x'_N 
 )$, the integral kernel of $F^N_j$ is defined as (see \cite{BGM})
 \bea\label{intmarg}
 F^N_j(
 x_1,x'_1;\dots;x_j,x'_j
 )
 :=
 (\tr^{j+1}\dots\tr^NF^N)(x_1,x'_1;\dots;x_j,x'_j)&\nn\\
 :=\int_{\bR^{d(N_j)}} F^N(x_1,x'_1;\dots;x_{j};x'_{j};x_{j+1},x_{j+1};\dots;x_N,x_N)dx_{j+1}dx_N.&
 \eea

It will be convenient for the sequel to rewrite \eqref{Ns} in the following operator form
\be\label{Nsop}
\partial_t F^N=(K^N+V^N)F^N
\ee
where $K^N,V^N$ are operators on $\cL^1(L^2(\bR^{Nd}))$ defined by
\be\label{knvn}
K^N=\frac1{i\hbar}[-\frac{\hbar^2}2\Delta_{\bR^{dN}},\cdot],\ \ \ V^N
=\frac1{2N}\sum_{k,l}V_{k,l}\mbox{ with }V_{k,l}:=\frac1{i\hbar}[V(x_k-x_l),\cdot].
\ee
As mentioned before   the (essantial) self-adjointness of $H_N$ implies that
\be\label{normpropag}
\norm{e^{t(K^N+V^N)}}_{\cL^1(L^2(\bR^{d}))\to\cL^1(L^2(\bR^{d}))}=
\norm{e^{tK^N}}_{\cL^1(L^2(\bR^{d}))\to\cL^1(L^2(\bR^{d}))}=1,\ t\in\bR.\ee
We will  denote 
\be\label{bl}
\lon:=\cL^1(L^2(\bR^{d}))\mbox{ so that }\lone{n}=
\cL^1(L^2(\bR^{nd})),\ n=\unN,
\ee
and, with a slight abuse of notation, 
\be\label{norms}
\left\{
\begin{array}{l}
\norm{\cdot}_1 \mbox{ the trace norm on any }\lone{j},\\
\norm{\cdot} \  \mbox{ the operator norm on any }\cL(\lone{i},\lone{j})
\end{array}\right.
\ee
for  $i,j=\unN$ (here $\cL(\lone{i},\lone{j})$ is the set of bounded operators form $\lone{i}$ to $\lone{j}$).

\medskip

A density matrix $F^n\in\lone{n}$ is called symmetric  if its integral kernel $F^n(x_1,x'_1;\dots;x_n,x'_n)$ is invariant by any permutation $$(x_i,x'_i)\leftrightarrow(x_j,x'_j),\ i,j=\un{n}.$$
Note that the symmetry of $F^N$ is preserved by the equation \eqref{Ns} due to the particular form of the potential.

We define, for $n=\unN$,
\be
\label{coneschro}
\cD_n= \{ F \in \lone{n} \,\,|\,\, F >0, \quad \| F\|_1=1 \quad \text {and $F$ is symmetric}\}.
\ee
Note that $F^N_j\in \lone{j}$ ($F^N_0=1\in\lone{0}:=\bC$) and $F^N_j >0, \norm{F^N_j}_1=\norm{F^N}_1$, and 
obviously $\FF^N_j$ is symmetric as $\FF^N$. That is to say: $$F^N_j\in\cD_j.$$

\bigskip


The family of $j$-marginals, $j=\unN$, are solutions of the BBGKY hierarchy of equations (see \cite{BGM})
\be\label{eqhiera}
\pa_t \FF^N_j = \left(K^j+\frac {T_j}{N}\right) \FF^N_j+ \frac {(N-j)} N C_{j+1} \FF^N_{j+1}
\ee
where:
\be
K^j=\frac1{i\hbar}[-\frac{\hbar^2}2\Delta_{\bR^{jd}},\cdot]
\ee
\be
T_j=\sum_{1\leq i < r \leq j} T_{i,r}\ \ \ \ \ \ \mbox{ with }\  T_{i,r}=V_{ir}
\ee
and
\be
\label{C}
C_{j+1}\FF^N_{j+1}
=\sum_{i=1} ^jC_{i,j+1}\FF^N_{j+1}
\ee
with
\bea
C_{i,j}:& \lon^{\otimes (j+1)} & \to\ \ \  \ \lon^{\otimes j}\nn\\
& C_{i,j+1}\FF^N_{j+1}\ &=\ \ \ \ \marg{j+1}\left(V_{i,j+1}\FF^N_{j+1}\right)\;,
\label{eq:Ci}
\eea
where $\marg{j+1}$ is the partial trace with respect to the $(j+1)$th variable, as in \eqref{intmarg}.
\vskip 1cm

Note that,  for all $ i \leq j=\unN$,
\be\label{normtc}
\norm{T_j}\leq j^2\frac{\norm{V}_{L^\infty}}\hbar,
\mbox{ and }\norm{C_{i,j+1}}\leq j\frac{\norm{V}_{L^\infty}}\hbar.
\ee
(meant for $\norm{T_j}_{\lon^{\otimes j}\to\lon^{\otimes j}}\mbox{ and } \norm{C_{i,j+1}}_{\lon^{\otimes (j+1)}\to\lon^{\otimes j}}
$ in accordance with \eqref{norms}).

\medskip
\vskip 1cm
The Hartree equation is
\be\label{mfeh}
i\hbar\pa_t\FF=[-\frac{\hbar^2}2\Delta+V_\FF(x),\FF],\ \ 
F(0)\in\cD_1,
\ee
where $V_F(x)=\int_{\bR^d}V(x-y)F(y,y)dy$, $F(y,y')$ being the integral kernel of ~$F$.

Note that \eqref{mfeh} reads also 
\be\label{mmmmf}
\partial_tF=K^1F+Q(F,F),
\ee
 with 
 \be\label{deffffq}
 Q(\FF,\FF)=\tr^2(V_{1,2}(\FF\otimes\FF)).
 \ee

Since $V$ is bounded, \eqref{mfeh} has  for all time a unique solution $F(t)>0$ and $\norm{F(t)}
=1$ (see again \cite{BGM}).

\vskip 1cm

In order to define the correlation error in an easy way, we need a bit more of notations concerning the variables of integral kernels.

 For $i\leq j=1,\dots,N,$ we  define the variables $z_i=(x_i,x'_i)$, and 
$Z_j=(z_1,\dots,z_j)$. For  $\{i_1,\cdots,i_k\}\subset \{1,\cdots,j\}$, we  denote
by $Z_{j}^{/\{i_1,\cdots,i_k\}}\in\bR^{2(j-k)d}$,  the vector $Z_j:=(z_1,\dots,z_j)$ after removing the components $z_{i_1},\dots z_{i_k}$.

\begin{Def}\label{defej}
For any 
$j=\unN$,
we define the  
{\em correlation error} $E_j\in~\lone{j}$ by its integral kernel
\be\label{deferrorq}
E_j(Z_j)=
\sum_{k=0}^j\sum_{1\leq i_1<\dots<i_k\leq j}(-1)^k
\FF(z_{i_1})\dots F(z_{i_k})F^N_{j-k}(Z_{j}^{/\{i_1,\cdots,i_k\}}).
\ee
By convention and consistently with $F^N_0=\norm{F}=1$, we define 
\be\label{ezero}
E_0:=1\in\lone{0}:=\bC.
\ee.
\end{Def}

In \cite{PPS} was shown that \eqref{deferrorq} is inverted by the following equality.
\be\label{invdeferrorq}
\FF^N_j(Z_j)=\sum_{k=0}^j\sum_{1\leq i_1<\dots<i_k\leq j}
\FF(z_{i_1})\dots F(z_{i_k})E_{j-k}(Z_{j}^{/\{i_1,\cdots,i_k\}}),\ j=0,\dots,N..
\ee
i.e. $F^N_j$ is the operator of integral kernel given by \eqref{invdeferrorq}.

\vskip 1cm

\subsection{Main results of \cite{PPS}}\label{resultsPPS}
 Theorem 2.4,  theorem 2.1 and Corollary 2.2 in \cite{PPS} state the following facts.
 
 The kinetic errors $ E_j,\ j=\unN,$
 satisfy the system of equations
\bea\label{eqhieraerror111}
\pa_t E_j&=& \left(K^j+\frac 1{N}T_j\right) E_j +
D_j
E_{{j}} \nn \\
&+&  
D_j^1
E_{j+1} + 
D^{-1}_j
E_{j-1} +
D_j^{-2}
E_{j-2},
\eea
where the  operators $D_J,D_j^1,D^{-1}_j,D_j^{-2},\ j=0,\dots, N$, are defined at the beginning of the Section \ref{recur}, formulas \eqref{ddefj}-\eqref{defd=}.

\vskip 1cm
\begin{Thm}[Theorem 2.2. and Corollary 2.3 in \cite{PPS}]\label{mainpps}
\ 

\noindent Let $E_j(0)$ satisfy  
for some $C_0>1,B>0$
\be\label{condint}
\left\{
\begin{array}{rcl}
\norm{E_1(0)}_1&\leq & \frac BN\\
\norm{E_j(0)}_1&\leq &(\frac{j^2}N)^{j/2}C_0^j,\  j\geq 2.
\end{array}
\right.
\ee
 \label{pagefoot1}
 Then,
 for all  $t>0$ and all $j=~1,\dots, N$, 
 one has 

\be\label{timet}
\left\{
\begin{array}{rcl}
\norm{E_1(t)}_1&\leq &\frac 1N \big( B_2 e^{\frac{B_1 t \|V\|_{L^\infty}}\hbar } \big)\;
\\
\| E_j(t) \|_1 &\leq &\left( C_2 e^{\frac{C_1 t\norm{V}_{L^\infty}}\hbar} \right)^j \left(\frac{j}{\sqrt N}\right)^j ,\  j\geq 2.
\end{array}
\right.
\ee
for some  $B_1,C_1>0,\ B_2,C_2\geq 1 $  explicit (see 
Theorem 2.2
in \cite{PPS}), 

\noindent and
\be\label{eqcormain}
\| \FF^N_j (t) - \FF(t)^{\otimes j}   \|_1 \leq 
 D_2 e^{\frac{D_1t\norm{V}_{L^\infty}}\hbar}\frac {j^2 }N,
 \ee
where $D_2=\sup\{B_2,(eC_0)^2\},\  B_1=\sup\{B_1,2C_1\}$.
\end{Thm}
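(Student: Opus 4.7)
The plan is to work with the hierarchy \eqref{eqhieraerror111} in Duhamel form, exploiting the fact that the free propagator $e^{tK^j}$ is an isometry on $\lone{j}$ by \eqref{normpropag}. Explicitly I would write
\[
E_j(t) = e^{tK^j}E_j(0) + \int_0^t e^{(t-s)K^j}\bigl(\tfrac{1}{N}T_j E_j(s) + D_j E_j(s) + D_j^1 E_{j+1}(s) + D_j^{-1} E_{j-1}(s) + D_j^{-2} E_{j-2}(s)\bigr)\,ds
\]
and take trace norms. Using \eqref{normtc} and the analogous bounds on $D_j,D_j^{\pm 1},D_j^{-2}$ read off from the definitions in Section~\ref{recur} (which track the combinatorial origin of these operators and produce polynomial-in-$j$ prefactors together with compensating inverse powers of $N$), the whole system reduces to a coupled integral inequality for the scalar quantities $a_j(t):=\norm{E_j(t)}_1$.

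The heart of the argument is a bootstrap on the ansatz
\[
a_1(t)\leq \frac{B_2}{N}\exp\!\Bigl(\frac{B_1 t\norm{V}_{L^\infty}}{\hbar}\Bigr),\qquad a_j(t)\leq \alpha(t)^j\Bigl(\frac{j}{\sqrt N}\Bigr)^j \text{ for } j\geq 2,
\]
with $\alpha(t)=C_2\exp(C_1 t\norm{V}_{L^\infty}/\hbar)$, where the constants $B_1,B_2,C_1,C_2$ are chosen so that reinserting this ansatz into the Duhamel inequality reproduces it. The Stirling-type comparison $(j\pm 1)^{j\pm 1}/j^j \lesssim (j\pm 1) e^{\pm 1}$ is what makes closure possible: the dangerous upward term $D_j^1 E_{j+1}$ contributes a factor of order $(j+1)e/\sqrt N$, which stays bounded in the regime $j\lesssim \sqrt N$; the downward couplings $D_j^{-1},D_j^{-2}$ carry explicit powers of $1/N$ that absorb the polynomial growth in $j$ of their operator norms; and the diagonal $T_j/N$ and $D_j$ terms merely produce linear-in-$t$ contributions that exponentiate through a Gr\"onwall step. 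The $j=1$ equation must be handled separately, since $E_0\equiv 1$ decouples and $E_2$ already satisfies the improved ansatz; the resulting scalar ODE for $a_1$ fixes $B_1,B_2$.

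The marginal bound \eqref{eqcormain} then follows as a purely algebraic consequence of the inversion formula \eqref{invdeferrorq}. Since the $k=0$ term on the right-hand side equals $F(t)^{\otimes j}$, subtraction and the triangle inequality give
\[
\norm{F^N_j(t)-F(t)^{\otimes j}}_1 \leq \sum_{k=1}^{j}\binom{j}{k} a_k(t).
\]
The $k=1$ contribution is at most $j B_2 e^{B_1 t\norm{V}_{L^\infty}/\hbar}/N$; for $k\geq 2$, combining $\binom{j}{k}\leq j^k/k!$ with $k^k/k!\leq e^k$ turns the ansatz into a geometric-type series in $k$ of ratio $\lesssim e\alpha(t) j/\sqrt N$, which is strictly less than one in the regime $j\lesssim\sqrt N$ and whose sum is therefore dominated by its $k=2$ term, of order $\alpha(t)^2 j^2/N$. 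Putting the two pieces together yields the stated $j^2/N$ bound with the explicit constants $D_1,D_2$.

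The principal obstacle in this program is the tension between the upward coupling, which raises $j$ by one without providing any gain in $1/N$, and the requirement that the ansatz $(j/\sqrt N)^j$ remain preserved by the iteration. Only the precise combinatorial form of $D_j^1$ — in which the natural factor turns out to be $\sqrt N$ rather than $N$ — makes the bootstrap go through, so the bulk of the analytic work consists in verifying that the Duhamel iteration produces exactly this Cauchy-product structure and that the constants $C_1,C_2$ can be picked uniformly in $j$.
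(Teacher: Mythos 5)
The statement you are asked to prove is not proved in this paper: it is quoted verbatim from \cite{PPS} (Theorem~2.2 and Corollary~2.3 there). The closest indication of the intended mechanism is the proof of Theorem~\ref{main} in Section~\ref{esti}, which works by expanding the Duhamel formula to a \emph{finite, $j$-dependent depth} $M$, so that the nested time integrals produce a $t^M/M!$ factor (cf.\ \eqref{hyp1}--\eqref{hyp4}, with the explicit choice $M=[(j+3k)/2]$). It is this factorial gain, traded against the combinatorial factors $j,j+1,\dots$ via Stirling, that closes the estimate, and the paper explicitly says (invoking Remark~3.2 of \cite{PPS}) that the proof of Theorem~\ref{mainpps} has the same structure. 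Your proposal replaces this by a Gr\"onwall-type bootstrap on the ansatz $a_j(t)\leq g(t)^j(j/\sqrt N)^j$, and that step does not close.

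Concretely: with $\norm{D_j^1}\leq j$ and $g(t)=C_2e^{C_1t}$ (take $\norm{V}_{L^\infty}/\hbar=1$), the upward Duhamel contribution is, for $C_1>2$,
\begin{equation*}
\int_0^t e^{cj(t-s)}\,j\,g(s)^{j+1}\Bigl(\tfrac{j+1}{\sqrt N}\Bigr)^{j+1}\,ds
\;\lesssim\;\frac{g(t)\,(j+1)\,e}{(C_1-2)\sqrt N}\;g(t)^j\Bigl(\tfrac{j}{\sqrt N}\Bigr)^{j}.
\end{equation*}
The $1/j$ from the time integral does cancel the $j$ from $\norm{D_j^1}$, so the multiplicative prefactor is indeed $\sim g(t)(j+1)e/((C_1-2)\sqrt N)$ as you say — but this is not the end of the story. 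For $j\sim\sqrt N$ this prefactor is of order $g(t)\,e/(C_1-2)$, which is $\geq 1$ and, worse, grows like $e^{C_1 t}$. Translated into a differential inequality for $g$, requiring $g(t)^j(j/\sqrt N)^j$ to be a supersolution forces $g'\gtrsim g^2$ near $j\sim\sqrt N$: a Riccati inequality whose solutions blow up in finite time. No choice of $C_1,C_2$ makes the ansatz reproduce itself for all $t>0$, so the sentence ``the constants are chosen so that reinserting this ansatz into the Duhamel inequality reproduces it'' is precisely where the argument breaks. The needed fix is to abandon the pointwise-in-$j$ Gr\"onwall closure and expand to a finite depth $M\sim j$, using $t^M/M!$ and the fact that each downward step costs $j^2/N$ while each upward step only costs $\sqrt N^{-1}$ after the rescaling $E_j=N^{-j/2}\cE_j$. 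Your derivation of \eqref{eqcormain} from \eqref{timet} via the inversion formula and the binomial/Stirling bound is fine (modulo a minor slip: it is the $k=j$, not the $k=0$, term of \eqref{invdeferrorq} that equals $F(t)^{\otimes j}$), and matches the paper's own Corollary~\ref{cormain}; the gap is entirely in the main estimate \eqref{timet}.
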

%
%
%
\vskip 1.5cm
\subsection{Asymptotic expansion and main result of the present article}\label{mainresult}\ 

\noindent Two questions arise naturally:
\begin{enumerate}
\item are the estimates \eqref{timet} sharp?
\item  could \eqref{eqcormain} be improved with a r.h.s. of any order we wish?\footnote{Of course \eqref{timet} and \eqref{eqcormain} imply that
\be\label{grossier}
\norm{\FF^N_j (t) - \FF^{\otimes j}  
-\sumls{J\neq K {\subset} J\\|K|\leq k\leq|J|} 
\FF^{\otimes{|K|}}E^N_{|J|-|K|}}=O(N^{-(k+1)/2}),
\ee
but first one cannot go further in the approximation, and second \eqref{grossier} is meaningless without the knowledge of the $E^N_j$s.}
\end{enumerate}
\vskip 1cm

We will see below that, indeed, not only the estimates \eqref{timet} are true, but $N^{j/2}E_j^N(t)$ has a full asymptotic expansion in positive powers of $(\frac1N)^{\frac12}$ (actually we will show that this expansion contains only powers  $(\frac1N)^{\frac k2}$ with $k+j$ even)
 if $N^{j/2}E_j^N(0)$ do posses such an expansion in half powers of $1/N$.
 
 More precisely we will show that, under the hypothesis \eqref{condint} on the initial data and  for all  time $t$ and all  $j=\unN$, 
 \be\label{defce}
 E_j(t)=N^{-j/2}\cE_j,\ \ \ \cE_j(t)\sim\suml_{\ell=0}^\infty\cE_j^\ell(t) N^{-\ell/2} \mbox{ with }\cE^k_j(t)=0\mbox{ for } j+k\mbox{ odd}
 \ee 
 when the same is true  at $t=0$.
 
\vskip 1cm
Moreover we will see that all the $\cE_j^\ell$ can be explicitly recursively computed after the knowledge of the linearization of the mean field equation \eqref{mfeh} around the solution of \eqref{mfeh} with initial condition $\FF(0)=(\FF^N(0))_1$. Indeed the proof will involve the ``$j$-kinteic linear mean field flow" 
defined by the linear kinetic  mean field equation of order $j$:
\be\label{linhartree}
\frac d{dt}A(t)=(K^j+\Delta_j(t))A(t),\ \ \ A(0)\in\lone{j},
\ee
where $\Delta_j(t)=\lim\limits_{N\to\infty}D_j(t)
$. 

\noindent\eqref{linhartree} is
solved by the two parameter semigroup $U^9_J(t,s)$ solving
\bea\label{jkin}
\partial _t U_j^{0}(t,s)&=&(K^j
+\Delta_j(t))U_j^0(t,s).\\ 
U_j^0(s,s)&=&I.\nn
\eea
Note that $U_j^0$ exists since $K^j$ generates a unitary flow and $\Delta_j$ is bounded. 

The reason of the terminology comes form the fact that, as shown by \eqref{ddefj},  
$\Delta_1=Q(\FF,\cdot)+Q(\cdot,F)$ so that, for $j=1$, \eqref{linhartree} is the linearization  of the mean field equation \eqref{mfeh} around its solution $\FF(t)$. Note moreover that, for $G^1,G^2\in\lon$,
\bea\label{delta2}
&&\Delta_2(G^1G^2+G^2G^1)=\\
&&(\Delta_1 G^1) G^2+ G^1(\Delta_1 G^2)+(\Delta_1 G^2) G^1+ G^2 (\Delta_1 G^1).\nn
\eea
and therefore
\be\label{u2ui}
U^0_2(t,s)(G^1G^2+G^2G^1)=(U^0_1(t,s)G^1)(U^0_1(t,s)G^2).
\ee
More generally, if $P_j:\ \lon^j\to\lone{j}$ is any homogeneous polynomial invariant by permutations,
\be\label{uju1}
U^0_J(t,s)P_j(G^1,\dots,G^j)
=
P_j(U^0_1(t,s)G^1,\dots,U^0_1(t,s)G^j).
\ee
That is: $U^0_J$ drives each $G^j$ along the linearized mean field flow ``factor by factor". Denoting by 
$\lon^{\overset{}{\otimes} j}_{sym}$ the subspace of symmetric (by permutations) vectors, 
we just prove the following result.
\begin{Lem}\label{lemsym}
\be\nn
U_j^0(t,s)|_{\lon^{{\otimes} j}_{sym}}=U_1^0(t,s)^{\otimes j}.
\ee
\end{Lem}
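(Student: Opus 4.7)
The plan is to deduce the lemma essentially immediately from identity \eqref{uju1} established just above, supplemented by a polarization argument that reduces the verification to the case of pure tensor powers.

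First, I would argue that it suffices to verify the equality on the generating family $\{G^{\otimes j}:G\in\lon\}$. This rests on two standard ingredients: (a) the classical polarization identity, which expresses every symmetrized product $\frac{1}{j!}\sum_{\sigma\in S_j}G_{\sigma(1)}\otimes\cdots\otimes G_{\sigma(j)}$ as an explicit finite linear combination of $j$-th tensor powers $(\sum_i\varepsilon_i G_i)^{\otimes j}$ with signs $\varepsilon_i\in\{\pm 1\}$, and (b) the norm-density of simple tensors in $\lon^{\otimes j}$, a standard fact about the trace class on tensor-product Hilbert spaces. Together these show that $\mathrm{span}\{G^{\otimes j}:G\in\lon\}$ is norm-dense in $\lon^{\otimes j}_{sym}$. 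Since $U_j^0(t,s)$ is bounded (by boundedness of $\Delta_j(t)$ together with unitarity of $e^{tK^j}$) and $U_1^0(t,s)^{\otimes j}$ is bounded as a tensor power of a bounded operator, the reduction to pure tensor powers is complete.

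Second, I would invoke identity \eqref{uju1} with the multilinear, permutation-invariant polynomial $P_j(G_1,\ldots,G_j):=\frac{1}{j!}\sum_{\sigma\in S_j}G_{\sigma(1)}\otimes\cdots\otimes G_{\sigma(j)}$, evaluated at $G_1=\cdots=G_j=G$, for which $P_j(G,\ldots,G)=G^{\otimes j}$. Identity \eqref{uju1} then directly yields
$$U_j^0(t,s)\,G^{\otimes j}=P_j\bigl(U_1^0(t,s)G,\ldots,U_1^0(t,s)G\bigr)=\bigl(U_1^0(t,s)G\bigr)^{\otimes j}=\bigl(U_1^0(t,s)^{\otimes j}\bigr)G^{\otimes j},$$
which is the lemma on pure tensor powers. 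Linearity and norm-continuity then propagate the equality to the whole symmetric subspace $\lon^{\otimes j}_{sym}$.

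I do not expect any genuine obstacle here: once \eqref{uju1} is taken for granted, the remainder is routine polarization plus density. The only mild point requiring attention is verifying that $\mathrm{span}\{G^{\otimes j}:G\in\lon\}$ is in fact trace-norm dense in $\lon^{\otimes j}_{sym}$; this follows at once from the polarization formula and the standard density of algebraic tensor products in $\cL^1$ of tensor-product Hilbert spaces.
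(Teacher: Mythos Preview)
Your argument is correct and follows the same route as the paper: the lemma is presented there as an immediate consequence of \eqref{uju1}, and you likewise deduce it from \eqref{uju1}. The only difference is a small detour: you specialize \eqref{uju1} to $G^1=\cdots=G^j=G$ and then invoke polarization to recover the general symmetrized tensors, whereas the paper applies \eqref{uju1} directly with distinct $G^1,\ldots,G^j$ and $P_j$ the symmetrization map, which already yields
\[
U_j^0(t,s)\,\mathrm{Sym}(G^1\otimes\cdots\otimes G^j)=\mathrm{Sym}\bigl(U_1^0(t,s)G^1\otimes\cdots\otimes U_1^0(t,s)G^j\bigr)=\bigl(U_1^0(t,s)\bigr)^{\otimes j}\mathrm{Sym}(G^1\otimes\cdots\otimes G^j),
\]
so polarization is unnecessary. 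Your explicit mention of boundedness and trace-norm density (which the paper leaves implicit) is a welcome clarification.
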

\vskip 0.5cm
\noindent Note also that $U_j^0(t,s)$ is given by a convergent Dyson expansion and that,  by the isommetry of the flow generated by $K^j
$ and the bound \eqref{endowq} below, we have by Gronwall Lemma that $\norm{U_j^o(t,s)}\leq e^{j|t-s|}$. 

We will also need  in the sequel the semigroup defined by 
\bea\label{jkinj}
\partial _t U_j^{}(t,s)&=&(K^j
+\tfrac{T_j}N
+\tfrac{N-j}ND_j(t))U_j(t,s).\\ 
U_j(s,s)&=&I.\nn
\eea

\noindent$U_j(t)$ exists  by the same argument as for $U_j^0(t)$. Moreover,  in the regime $j^2/N$ small, $U_j$ can be also computed out of $U_j^0$ by a convergent perturbation expansion (in $j^2/N$). Indeed \eqref{ddefj}, \eqref{normtc} and \eqref{endowq} show clearly that $\norm{\tfrac{T_j}N+\tfrac{N-j}ND_j-\Delta_j}\leq \tfrac{j^2}N$. Therefore, for any $t$, $U_j(t)$ can be approximated,   up to any power of $j^2/N$, by a finite Dyson expansion.

\vskip 1cm
Finally, we will extend \eqref{eqcormain} as we will show that $F^N_j$ has an asymptotic expansion in positive powers of $1/N$ whose partial sums up to any order $n\geq 0$ is $O(jN^{-n-1})$-close to $F^N_j$.

\vskip 1cm
The main results of the present note are the following.
\begin{Thm}\label{main}
Consider for $j=0,\dots,N,\  k=0,\dots,\ t\geq 0$ the system of recursive relations
\be\label{eqrecur}
\left\{
\begin{array}{rcl}
\cE_j^k(t)&=&U_j(t,0)\cE_j^k(0)\\
&+&\int_{s=0}^t
U_j(t,s)(\Delta_j^=\cE_{j-2}^k(s)
 +\Delta_j^+\cE_{j+1}^{k-1}(s)+\Delta_j^-\cE_{j-1}^{k-1}(s))
 ds\\
\cE_0^k(t)&=&\delta_{k,0},
\\
\cE_{-1}^k(t)&=&\cE_{-2}^k(t)\ =\ \cE_j^{-1}(t)\ =\ 0\mbox{ by convention.}
\end{array}\right.
\ee
where $U_j(t,s)$ is 
 the two times flow defined by \eqref{jkinj} and  $\Delta^+_jD_j^1,\ \Delta^-_j=ND^{-1}_j,~\  \Delta^=_j=ND_j^{-2}$, the $D_j$s being given by \eqref{ddefj}-\eqref{defd=}.
 
Then, for any $j=1,\dots,N,\ k=0,\dots,\ t\geq 0,$ the knowledge of 
$\cE_{j'}^{k'}(0)$ for $j'=1,\dots,j+k,\  k'=0,\dots,k$, determine in a unique way  $\cE_j^k(t)$, and 
\be\nn
\cE^k_j(t)=0\mbox{ when }j+k\mbox{ is odd}
\ee
if $\cE^k_j(0)$ satisfies the same property.


Moreover, for all $t\in\bR$,  the solution $E_j(t)$ of   \eqref{eqhieraerror111} with initial data $E_j(0)$ satisfying \eqref{condint}. has an asymptotic expansion 
$E_j(t)\sim\sumls{k=0}^\infty N^{-\frac{j+k}2}\cE_j^k(0)$, where $\cE_j^k(t)$ is the solution of 
\eqref{eqrecur} with initial condition $\cE_j^k(0)=\delta_{k,0}N^{\frac j2}E_j(0)$ (resp. $\delta_{k,1}N^{\frac j2}E_j(0)$) if $j$ is even (resp. $j$ is odd) and one has
$$
\norm{E_j(t)-\sum_{k=0}^{2n}N^{-k/2}E_j^k(t)}_1
\leq
D_{2n}(t)
N^{-n-\frac12}(
D'_{2n}(t)
\tfrac{j^2}N)^{j/2},
$$
where $D_k(t),D'_k(t)$ are defined in \eqref{dnt} below
and satisfy,$\mbox{ as }k,|t|\to\infty$,
$$
\log{D_k(t)}=\tfrac{3k}2(\log{k}+\tfrac{|t|\norm{V}_\infty}\hbar)+O(k+\tfrac{|t|\norm{V}_\infty}\hbar)\mbox{ and }\log{D_k'(t)}=O(k+\tfrac{|t|\norm{V}_\infty}\hbar).
$$

\end{Thm}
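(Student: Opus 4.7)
The proof splits into an algebraic part (unique solvability of \eqref{eqrecur} and the parity statement) and an analytic part (the quantitative expansion of $E_j(t)$). For the algebraic part I would proceed by double induction, outer on $k$ and inner on $j$. At $k=0$, the terms $\cE_{j\pm 1}^{-1}$ vanish by convention, leaving only the $\Delta_j^= \cE_{j-2}^0$ contribution on the right-hand side of \eqref{eqrecur}, so starting from $\cE_0^0(t)=1$ and $\cE_1^0(t)=U_1(t,0)\cE_1^0(0)$ (since $\cE_{-1}^0=0$) one obtains $\cE_j^0$ recursively in steps of two. For $k\geq 1$, assuming $\cE_{j'}^{k'}$ known for all $k'\leq k-1$ and the relevant $j'$, the equation for $\cE_j^k$ depends only on $\cE_{j-2}^k$ at the same level $k$ (plus lower levels through the source), so the same stepping-by-two inner induction on $j$ closes. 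Tracking the upward shift $k-1\to k$ through the term $\cE_{j+1}^{k-1}$ across iterations yields exactly the claimed range $j'\leq j+k$, $k'\leq k$. For parity, each of the three operators shifts $(j,k)\mapsto (j-2,k), (j+1,k-1), (j-1,k-1)$, all of which preserve $j+k\bmod 2$; the same induction therefore transports the parity property from the initial data to every time.

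Before controlling the remainder I would record the formal consistency: substituting $E_j(t)\sim\sum_{k\geq 0}N^{-(j+k)/2}\cE_j^k(t)$ into the Duhamel form of \eqref{eqhieraerror111} driven by the semigroup $U_j$ of \eqref{jkinj} and equating like powers of $N^{-1/2}$ reproduces exactly \eqref{eqrecur}. The factors of $N$ in $\Delta_j^+=D_j^1$, $\Delta_j^-=N D_j^{-1}$, $\Delta_j^= = N D_j^{-2}$ are precisely those needed to match the prefactor shift $N^{-j/2}\to N^{-(j\pm 1)/2}, N^{-(j-2)/2}$ induced by the action of the $D$-operators on $E_{j\pm 1}, E_{j-2}$. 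Define then $S_j^{2n}(t):=\sum_{k=0}^{2n}N^{-(j+k)/2}\cE_j^k(t)$ and the remainder $R_j^{2n}(t):=E_j(t)-S_j^{2n}(t)$. By formal consistency, $R_j^{2n}$ satisfies a version of \eqref{eqhieraerror111} with zero initial datum and an explicit source supported on the two truncated layers $k=2n, 2n-1$ of the $\Delta_j^+\cE_{j+1}^k$ and $\Delta_j^-\cE_{j-1}^k$ terms, with prefactor of order $N^{-(n+1)}$.

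To close the estimate I would first prove by the same double induction a bound of the form $\|\cE_j^k(t)\|_1\leq D_k(t)(D_k'(t))^{j}$ on each coefficient, using $\|U_j(t,s)\|\leq e^{Cj|t-s|}$, the elementary bounds $\|\Delta_j^\pm\|,\|\Delta_j^=\|\lesssim j^2\|V\|_\infty/\hbar$, and summing the iterated Duhamel expressions; a Stirling count of $k$ iterations each of combinatorial size $O(j^3)$ yields the asserted asymptotics $\log D_k(t)\sim\tfrac{3k}{2}(\log k+t\|V\|_\infty/\hbar)$. Then $R_j^{2n}$ solves a forced hierarchy of the same type as \eqref{eqhieraerror111}, and applying the strategy of \cite{PPS} recalled in Theorem \ref{mainpps} to this forced version, with the source estimated by the previous step, gives $\|R_j^{2n}(t)\|_1\leq D_{2n}(t)N^{-n-1/2}(D_{2n}'(t)j^2/N)^{j/2}$. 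The main obstacle is precisely this last step: balancing the factorial-in-$n$ growth of the source against the $(j^2/N)^{j/2}$ smallness in the regime $j\lesssim\sqrt N$, which demands a version of the \cite{PPS} tree estimates that tolerates a forcing term and keeps explicit the dependence in $n$, so that the final constants $D_{2n}, D_{2n}'$ acquire the asserted asymptotic behavior.
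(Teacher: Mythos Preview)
Your proposal is correct and follows essentially the same route as the paper: double induction for the recursive determination and parity, derivation of a forced version of \eqref{eqhieraerror111} for the remainder, a priori bounds on the coefficients $\cE_j^k$ by iterating \eqref{eqrecur}, and finally the \cite{PPS} estimate (Theorem~\ref{mainpps}) applied to the forced hierarchy via Duhamel.

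Two small refinements worth noting. First, the source in the remainder equation involves only the single top layer $k=2n$, namely $N^{-(2n+1)/2}(\Delta_j^+\cE_{j+1}^{2n}+\Delta_j^-\cE_{j-1}^{2n})$, not two layers. Second, the shape of the coefficient bound matters for feeding back into \cite{PPS}: one needs $\|\cE_j^k(t)\|_1\le C_k(t)\,(C_k'(t)\,j^2)^{j/2}$, i.e.\ with the $j^j$ factor, rather than merely $D_k(t)(D_k'(t))^j$, because the hypothesis of Theorem~\ref{mainpps} is precisely of the former type. The paper obtains this by iterating \eqref{eqrecur} not $k$ times but $M=[(j+3k)/2]$ times (enough to drive every branch down to $\cE_0^0=1$), which produces a factor $3^M\,((j+k)^2)^M\,|t|^M/M!\,e^{c(j+k)|t|}$; separating the $j$- and $k$-dependence via $j^\lambda\le e^{\lambda j/e}$ and Stirling yields exactly the shape $(C_k'(t)j^2)^{j/2}$ with $\log C_k(t)\sim\tfrac{3k}{2}\log k$. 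With these adjustments your outline matches the paper's proof.
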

Let us define, for $j=\unN,\ n=0,\dots$
\be\label{Enj}
E_{j}^{n}(t)=
\sum_{k=0}^{2n
}N^{-\frac{j+k}2}\cE_j^k(t)
\ee
(note that $E_{j}^{n}(t)$ contains only integer powers of $N^{-1}$, since $\cE^k_j=0\mbox{ when }j+~k$ { is odd}, that is  $E^n_j=\suml_{k=[(j+1)/2]}^nc^j_kN^{-k}$), and $F^{N,n}_j(t)$ the operator of integral kernel $\FF^{N,n}_j(t)(Z_j)$ defined by
\be\label{FNnj}
\FF^{N,n}_j(t)(Z_j)=\sum_{k=0}^j\sum_{1\leq i_1<\dots<i_k\leq j}
\FF(t)(z_{i_1})\dots F(t)(z_{i_k})E_{j-k}^{n}(Z_{j}^{/\{i_1,\cdots,i_k\}}),\
\ee
(that is \eqref{invdeferror} truncated at order $n$,
  same slight abuse of notation).
 $F^{N,n}_j$ is therefore a polynomial of order $n$ in $\frac1N$.
\begin{Cor}\label{cormain}
Let $\FF^N(t)$ the solution of the quantum $N$ body system \eqref{Ns}  with initial datum $\FF^N(0)=\FF^{\otimes N}$, $\FF\in\cL(L^2(\bR^{d})), \FF\geq 0,\tr{\FF}=1$, and $\FF(t)$ the solution of the Hartree equation \eqref{mfeh} with initial datum $\FF$.

Then, for all $n\geq1$ and $N\geq 4(eA^{2n}_tj)^2$,
\bea
&&
\norm{\FF^N_j(t)-
F^{N,n}_j(t)
}_1
\leq
N^{-n-\frac12}\ 
\tfrac{2
D_{2n}(t)
e
A^{2n}_t
D'_{2n}(t)
j}{\sqrt N}.\nn
\eea
\end{Cor}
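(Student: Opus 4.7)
The plan is to combine the inversion formula \eqref{invdeferrorq} for $F^N_j$ with the truncated analogue \eqref{FNnj} that defines $F^{N,n}_j$, and to control the resulting sum of differences using Theorem \ref{main}. First, I would subtract \eqref{FNnj} from \eqref{invdeferrorq} to obtain
\begin{equation*}
F^N_j(t) - F^{N,n}_j(t) = \sum_{k=0}^{j-1}\sum_{1\leq i_1<\ldots<i_k\leq j} F(t)(z_{i_1})\cdots F(t)(z_{i_k})\,\bigl[E_{j-k}(t)-E^n_{j-k}(t)\bigr](Z_j^{/\{i_1,\ldots,i_k\}}),
\end{equation*}
the $k=j$ term vanishing because $E_0=E^n_0=1$. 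Taking trace norms, using $\|F(t)\|_1=1$ (which follows from the preservation of the trace under \eqref{mfeh}) and the fact that the inner sum has $\binom{j}{k}$ terms, I would reduce the estimate to
\begin{equation*}
\|F^N_j(t) - F^{N,n}_j(t)\|_1 \leq \sum_{m=1}^{j}\binom{j}{m}\,\|E_m(t)-E^n_m(t)\|_1,
\end{equation*}
with $m=j-k$.

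Next I would insert the bound of Theorem \ref{main},
\begin{equation*}
\|E_m(t)-E^n_m(t)\|_1 \leq D_{2n}(t)\,N^{-n-1/2}\,\bigl(D'_{2n}(t)\,m^2/N\bigr)^{m/2},
\end{equation*}
and simplify the combinatorial factor via the Stirling-type inequality $\binom{j}{m}\leq j^m/m! \leq (ej/m)^m$. Together these give $\binom{j}{m}(D'_{2n}(t)m^2/N)^{m/2}\leq r^m$ with $r := ej\sqrt{D'_{2n}(t)/N}$, so that
\begin{equation*}
\|F^N_j(t)-F^{N,n}_j(t)\|_1 \leq D_{2n}(t)\,N^{-n-1/2}\,\sum_{m=1}^{j} r^m.
\end{equation*}
The hypothesis $N\geq 4(eA^{2n}_t j)^2$ is designed exactly to force $r\leq 1/2$ once $A^{2n}_t$ absorbs the factor $\sqrt{D'_{2n}(t)}$ coming from the estimate on the differences of kinetic errors; under that bound the geometric sum is controlled by twice its first term, yielding
\begin{equation*}
\|F^N_j(t)-F^{N,n}_j(t)\|_1 \leq 2 D_{2n}(t)\,N^{-n-1/2}\,r = N^{-n-1/2}\cdot\frac{2 D_{2n}(t)\,e A^{2n}_t\,D'_{2n}(t)\,j}{\sqrt{N}}.
\end{equation*}

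The core of the argument is straightforward once Theorem \ref{main} is in hand; the main step requiring care is the bookkeeping that identifies $A^{2n}_t$ with the precise combination of the constants $D_{2n}(t),D'_{2n}(t)$ so that the smallness condition on $r$ is equivalent to the stated hypothesis $N\geq 4(eA^{2n}_t j)^2$ and the resulting prefactor matches the one in the corollary. A secondary point is to verify that $F(t)(z_{i_1})\cdots F(t)(z_{i_k})$, acting as a tensor product of trace-one operators on the $i_1,\ldots,i_k$ slots, does indeed multiply the trace norm of the kernel by $1$; this follows from the product structure of the kernel in \eqref{invdeferrorq} and \eqref{FNnj}, but is worth making explicit since $F^N_j$ is not in general a tensor product.
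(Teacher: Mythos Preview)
Your argument is correct and follows essentially the same route as the paper: subtract the truncated inversion formula from the exact one, bound the trace norm by $\sum_{m=1}^j\binom{j}{m}\|E_m-E_m^n\|_1$, insert the estimate from Theorem~\ref{main}, reduce via a Stirling-type inequality to a geometric sum in $r=ej\sqrt{D'_{2n}(t)/N}$, and use the hypothesis on $N$ to force $r\le\tfrac12$. The paper's proof differs only cosmetically (writing $\binom{j}{k}\le j^k/k!$ and $k^k/k!\le e^k$ separately rather than combining them as $\binom{j}{m}\le(ej/m)^m$), and it shares the same loose identification of the constant $A^{2n}_t$ with $\sqrt{D'_{2n}(t)}$ that you flag as the bookkeeping point.
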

\begin{Rmk}
If one is interested only to the expansion up to order $n<j$, we can change the sum in the l.h.s. of the inequality in  Corollary \ref{cormain} by a sum up to $\ell=n$.
\end{Rmk}
\begin{proof}
The proof is similar to the one of Corollary 2.2 in \cite{PPS}.
\bea
&&\norm{\FF^N_j(t)-
\sum_{K \subset \{\un{j}\}} \Pii{K}{j}\ \big(F^{\otimes |K|}\otimes E^n_{j-|K|}\big)
}\nn\\
&\leq&
\sum_{k=1}^j\binom{j-k}{k}
\norm{E_{k}-E_{k}^n}\nn\\
&\leq& N^{-n-\frac12}
\sum_{k=1}^j\binom{j}{k}
C_{2n-k}(t)
(\frac{A^{2n-k}_tk^2}N)^{k/2}\nn\\
&\leq&
N^{-n-\frac12}C_{2n}(t)
\sum_{k=1}^jj(j-1)\dots(j-k+1)
(\frac{A^{2n}_t}{\sqrt N})^{k}\frac{k^k}{k!}\nn\\
&\leq&
N^{-n-\frac12}C_{2n}(t)
\sum_{k=1}^j
(\frac{jeA^{2n}_t}{\sqrt N})^{k}
\leq N^{-n-\frac12}\ \tfrac{2C_{2n}(t)eA^{2n}_tj}{\sqrt N}\nn
\eea
for $N\geq 4(eA^{2n}_tj)^2$  
(we used $\frac{k^k}{k!}\leq \frac{e^k}{\sqrt{2\pi k}}$).
\end{proof}
\begin{Cor}\label{corcormain}
The rate of convergence to the meanfield limit in $\frac1N$ is optimal.
\end{Cor}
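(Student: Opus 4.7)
My plan is to combine the asymptotic expansion from Corollary \ref{cormain} with an explicit computation of its first nontrivial term, so as to show the $1/N$ rate is already saturated.

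First, I would specialise to a chaotic initial state $F^N(0)=F^{\otimes N}$, for which all correlation errors $E_j(0)$ vanish for $j\geq 1$; hypothesis \eqref{condint} then holds with $B=0$, and the recursion \eqref{eqrecur} is initialised by $\cE_j^k(0)=\delta_{j,0}\delta_{k,0}$. Taking $j=1$ and $n=1$ in Corollary \ref{cormain}, and using the parity property from Theorem \ref{main} that $\cE_j^k=0$ for $j+k$ odd (so that $\cE_1^0\equiv\cE_1^2\equiv 0$), only the single term $E_1^1=N^{-1}\,\cE_1^1$ survives in \eqref{Enj}, whence
$$\bigl\|\,F^N_1(t)-F(t)-N^{-1}\,\cE_1^1(t)\,\bigr\|_1\;\leq\;C(t)\,N^{-3/2}.$$
Consequently it is enough to exhibit one triple $(V,F,t)$ for which $\cE_1^1(t)\neq 0$ to rule out any improvement of the $1/N$ rate.

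Second, I would unroll the recursion \eqref{eqrecur}. With the chaotic initialisation, the parity constraint, and the conventions $\cE_{-1}^k=\cE_{-2}^k=\cE_j^{-1}=0$, only two source terms survive: the constant source $\Delta_1^-\,\cE_0^0=\Delta_1^-$, and $\Delta_1^+\,\cE_2^0(s)$, where $\cE_2^0(s)=\int_0^s U_2(s,u)\,\Delta_2^=\,du$ by the same reasoning applied at level $j=2$. This gives the closed expression
$$\cE_1^1(t)\;=\;\int_0^t U_1(t,s)\,\Delta_1^-\,ds\;+\;\int_0^t\!\!\int_0^s U_1(t,s)\,\Delta_1^+\,U_2(s,u)\,\Delta_2^=\,du\,ds.$$

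Third, I would extract the short-time behaviour. Using $U_j(s,s)=I$ from \eqref{jkinj} and Taylor-expanding each semigroup in time gives
$$\cE_1^1(t)\;=\;t\,\Delta_1^-\!\cdot 1\;+\;O(t^2)\qquad\mbox{as }t\to 0^+,$$
so it remains to verify that $\Delta_1^-\!\cdot 1\in\lon$, which by definition is the $N\to\infty$ limit of $N\,D^{-1}_1\!\cdot 1$ appearing in \eqref{ddefj}--\eqref{defd=}, is nonzero for suitable $V$ and $F$. By construction it is a commutator-type functional of $V$ and $F$ built from $V_{1,2}$ and the mean-field state via a partial trace, manifestly nontrivial as soon as $V\not\equiv 0$ and $F$ is not a stationary state of the Hartree flow \eqref{mfeh}. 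Alternatively, nontriviality of the leading $1/N$ correction is guaranteed by the classical Hepp--Lieb identification \cite{Hepplieb} of the $1/N$ quantum fluctuation with a nondegenerate Gaussian stochastic process.

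The main obstacle is this last step: confirming explicitly that $\Delta_1^-\!\cdot 1\neq 0$ requires a careful unpacking of the definitions \eqref{ddefj}--\eqref{defd=}, and one must also be sure that no subtle cancellation with the second (order-$t^2$) term in the above formula for $\cE_1^1(t)$ occurs. The short-time expansion, however, isolates $t\,\Delta_1^-\!\cdot 1$ as the lone $O(t)$ coefficient, so that checking non-vanishing on any one example suffices.
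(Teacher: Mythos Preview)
Your approach is correct and is in fact more detailed than what the paper provides: the paper states Corollary~\ref{corcormain} without proof, relying implicitly on the existence of the asymptotic expansion together with the explicit formula for $\cE_1^1(t)$ derived in Section~\ref{explicit}. Your argument --- specialise to factorised initial data, apply Corollary~\ref{cormain} with $j=1,n=1$, reduce via parity to $F^{N,1}_1=F+N^{-1}\cE_1^1$, and then exhibit a nonzero $\cE_1^1(t)$ --- is exactly the natural way to turn the paper's implicit claim into a proof, and your two-term integral formula for $\cE_1^1(t)$ coincides with the one displayed in Section~\ref{explicit}.

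Two small clarifications strengthen your last step and remove the hedging. First, $\Delta_1^-$ is not a limit: by definition in Theorem~\ref{main} one has $\Delta_1^-=ND_1^{-1}$, and the convention \eqref{E0q} gives $D_1^{-1}(E_0)=-\tfrac1N Q(F,F)$, so $\Delta_1^-\!\cdot 1=-Q(F(s),F(s))$ exactly. Hence your short-time expansion reads $\cE_1^1(t)=-t\,Q(F,F)\big|_{t=0}+O(t^2)$, and the nonvanishing criterion is simply $Q(F,F)=\tfrac1{i\hbar}[V_F,F]\neq 0$, i.e.\ $F$ does not commute with its own mean-field potential $V_F$ --- a condition trivially satisfied for generic $(V,F)$. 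Second, your worry about ``subtle cancellation with the second term'' is unfounded precisely because, as you observe, $\cE_2^0(s)=\int_0^s U_2(s,u)\Delta_2^=\,du=O(s)$, so the $\Delta_1^+\cE_2^0$ contribution is genuinely $O(t^2)$ and cannot interfere with the $O(t)$ coefficient.
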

\begin{Rmk}
In the asymptotic expansion $E_j(t)\sim\suml_{k=[(j+1)/2]}^\infty c^j_k(t)N^{-k}$ the coefficients $c^j_k(t)$, such as $\cE^k_j(t)$,  depend on $N$ as well: first by the dependence of $\Delta^+_j=(1-\frac jN)C_{j+1}$ and also by $U_j(t,s)$ defined by \eqref{u0}. As mentioned already the latter can be  expressed as a (convergent) series in $\frac1N$ out of the linearization of the meanfield equation so that obtaining a full asymptotic expansion of $E_j(t)$ with the only knwoledge of the linearization of the meanfield equation is (tedious but) elementary. Let us note also that $E_j(0)$ is allowed to depend on $N$, without any restriction as soon as it satisfies \eqref{condint}.
\end{Rmk}
\section{The recursive construction}\label{recur}
Let us recall from \cite{PPS} that the hierarchy of error terms saisfy the following equation:
\bea\label{eqhieraerror}
\pa_t E_j&=& \left(K^j+\frac 1{N}T_j\right) E_j +
D_j
E_{{j}} \nn \\
&+& 
D_j^1
E_{j+1} + 
D^{-1}_j
E_{j-1} +
D_j^{-2}
E_{j-2}.
\eea
Here the four operator $D_j,D_j^1,D^{-1}_j,D_j^{-2},\ j=0,\dots, N$,
%
are defined as follows (here again, $J=\{1,\dots,j\})$: for any operator $G\in\lone{n},n=\unN$, we denote by $G(Z_n)$ its  integral kernel, and for any function $F(Z_n),n=\unN,$ we define 
$\op{F(Z_n)}$ as being the operator on $\lone{n}$ of integral kernel $F(Z_n)$ then

\bea
D_j&:& \ \lone{j}\to\lone{j}\nn\\
&&E_j\mapsto\frac{N-j}N \suml_{ i\in J}  C_{i,j+1}
 \left(
\op{F(z_i)E_j(Z_{j+1}^{/\{i\}})}
+
\op{F(z_{j+1})E_j(Z_j)}
\right)\nn\\
&&\hskip 1.2cm
-\frac1N
 \suml_{i\neq l\in J }
C_{i,j+1}(\op{F(z_l)E_j(Z_{j+1}^{/\{l\}})}
\label{ddefj}\\
D_j^1&:& \lone{(j+ 1)}\to\lone{j}\nn\\
&& E_{j+1}\mapsto
\frac{N-j}N C_{j+1} 
E_{j+1}\nn\\ \label{defd+} \\
D^{-1}_j&:& \lone{(j- )1}\to\lone{j}\nn\\
&& E_{j-1}\mapsto
\frac 1{N} \suml_{i,r \in J}  T_{i,r}
\op{F(z_i)E_{j-1}(Z_j^{/\{i\}})}
-
\frac jN  \suml_{i\in J}
\op{Q(\FF,\FF)(z_i)E_{j-1}(Z_j^{/\{i\}})}
\nn\\ 
&&\hskip 1.5cm-
\frac1N
\suml_{i\neq l\in J }
{
C_{i,j+1} (\op{F(z_l)F(z_{j+1})
E_{j-1}(Z_j^{/\{l\}}})}
 \nn\\
&&\hskip 1.5cm-
\frac1N
\suml_{i\neq l\in J }
{
C_{i,j+1}\big(\op{F(z_l)F(z_i)E_{j-1}(Z_{j+1}^{/\{i,l\}}}\big)
}
\label{defd-}\\
D_j^{-2}&:& \ \lone{(j-2)}\to\lone{j}\nn\\
&&E_{j-2}\mapsto
\frac 1N  \suml_{i,s \in J}  T_{i,s} 
\op{F(z_i)F(z_r)E_{j-2}(Z_j^{/\{i,r\}})}
.\nn\\
&&\hskip 1.6cm 
-\frac1N
\suml_{i\neq l\in J } 
\op{
Q(\FF,\FF)(z_i)F(z_l)
E_{j-2}(Z_l^{/\{i,l\}})}.\label{defd=}
\eea
where, by convention, 
\be
\label{E0q}
\left\{
\begin{array}{l}
 D_N^1:=D_1^{-2}:=0\\
 D_1^{-1} \left(E_0\right):=-\frac 1N Q(F,F)\;,
\\
D_2^{-2}\left(E_0\right):= \frac 1 N  \big(T_{1,2} (F \otimes F) -  Q(F,F)\otimes 
F -  F\otimes Q(F,F) 
\big) \;.
\end{array}
\right.
\ee
In \eqref{ddefj}-\eqref{defd=}, $F(z)$ is meant as being the integral kernel of $F(t)$ solution of the Hartree equation \ref{mfeh}.

We define $H_j(t)=K^j+T_j/N+D_j(t)$ and recall the definition of the (two parameters) semigroup $U_j (t,s)$ 
satisfying, for all $s,t\in\bR$,
\bea\label{u0}
\ \ \ \ \ \ \pa_t U_j (t,s)&=&
H_j(t)
U_j (t,s),\ j=1,\dots,N\nn \\
 \ \ \ U_j (s,s)&=&I=:U_0 (t,s) 
\eea 
 
 Let us perform the following rescaling
 \be\label{resc}
 \left\{
 \begin{array}{rcl}
 E_j&=&N^{-j/2}\cE_j\\
\Delta_j^+&=&D_j^1\\
\Delta_j^-&=&ND^{-1}_j\\
\Delta^=_j&=&ND_j^{-2}
 \end{array}
 \right.
 \ee
 
 We find easily that, again with the convention \eqref{E0q},

\bea
\partial_t\cE_j&:=&H_j\cE_j+N^{-\frac12}D_j^1\cE_{j+1}
+N^{\frac12}D^{-1}_j\cE_{j-1}+ND_j^{-2}\cE_{j-2}\nn\\
&=&
H_j\cE_j+N^{-\frac12}\Delta^+_j\cE_{j+1}
+N^{-\frac12}\Delta^-_j\cE_{j-1}+\Delta^=_j\cE_{j-2}
\label{theeq}\\
&=&H_j\cE_j+\Delta^=_j\cE_{j-2}+O(
N^{-\frac12})
.\nn
\eea
We define $\cE^0_j(t)$ as the solution of 
\bea
\partial_t\cE_1^0&=&H_1\cE_1^0
\label{e11}\\
\partial_t\cE_2^0&=&
H_2\cE_2^0+
T_{1,2} (F \otimes F) -  Q(F,F)\otimes 
F -  F\otimes Q(F,F)
\label{e22}\\
\partial_t\cE_j^0&=&
H_j\cE_j^0+\Delta_j^=\cE_{j-2}^0, j\geq 3.
\label{ej}
\eea 
\eqref{e11} and \eqref{e22}  are two closed equations 
whose solutions are given by 
\be\label{solu01}
\cE_1^0(t)=U_1(t,0)\cE_1^0(0)=0
\ee
since we supposed $E_1(0)=O(N^{-1})$, and
\bea\label{solu02}
\cE_2^0(t)&=&U_1(t,0)\cE_2^0(0)\nn\\
&+&U_2(t,0)\int_0^tU_2(0,s)(T_{1,2} (F \otimes F) -  Q(F,F)\otimes 
F -  F\otimes Q(F,F))ds.
\eea
Iterating till $j$, we get explicitly the solution of \eqref{e11}-\eqref{ej} given by
\be\label{ind1}
\cE_j^0(t)=U_j(t,0)\cE_j^0(0)+U_j(t,0)\int_0^t U_j(0,s)\Delta_j^=\cE_{j-2}^0(s)ds, j\geq 1,
\ee
with the convention $\cE^k_l=0,l<0$.

Therefore, for $j=\unN, t\in\bR$, the knowledge of $U_{j}(t,s),\  |s|\leq |t|,$,  and $\cE_{j'}^0(0),\ j'=1,\dots,j$ guarantees  the knowledge of $\cE^0_{j'}(t),t\in\bR,j'\leq j$. We write this fact as
\be\label{leads1}
(\cE^0_{j'}(0))_{j'=\un{j}}\leadsto (\cE^0_{j'}(t))_{ t\in\bR, j'=\un{j}}
\ee
\vskip 1cm
Making now the ansatz $\cE_j(t)\sim\suml_{k=0}^\infty\cE_j^k(t)N^{-k/2}$ we find that the family $(\cE^k_j(t))_{j=\unN,k=0,\dots}$ must satisfy
\bea\label{eqk}
\partial_t\cE_j^k&=&H_j\cE_j^k+D_j^{-2}\cE_{j-2}^k+\Delta_j^+\cE_{j+1}^{k-1}+\Delta_j^-\cE_{j-1}^{k-1},   \label{e11k}
\eea

with again the conventions \eqref{E0q} and  $\cE^l_j=0$ for $l<0$, solved 
by
\bea
\cE_j^k(t)&=&U_j(t,0)\cE_j^k(0)\nn\\
&+&\int_{s=0}^tU_j(t,s)(\Delta_j^=\cE_{j-2}^k(s)+\Delta_j^+\cE_{j+1}^{k-1}(s)+\Delta_j^-\cE_{j-1}^{k-1}(s))ds   \label{soluk}
\eea
Since 
$\cE^k_{-1}(t)=0$ by convention and $\cE^k_0(t)=0$ for $k\geq 1$ since $E_0(t):=1$, we find after \eqref{leads1} that $\cE_1^1(t)$ and  $\cE_2^1(t)$ are determined by $\cE_1^1(0)$ and  $\cE_2^1(0)$. Therefore, by \eqref{soluk}, $\cE_j^1(t), j=\unN$ are determined by $(\cE_j^1(0))_{j=\unN}$, and determine  $\cE_1^2(t)$ and $\cE_2^2(t)$. These ones determine in turn all the $\cE_j^2(t),j=\unN$ and so on.

\vskip 0.5cm
Therefore, the knowledge of $(\cE^{k'}_{j'}(s))_{|s|\leq |t|,k'\leq k-1,j'=\un{j+1}}$ and $\cE_j^k(0)$ guarantees fpr all $j,k$, by induction, the knowledge of $\cE_j^k(t)$. Thus
\be\nn
\big((\cE_{j}^{k}(0),(\cE^{k'}_{j'}(s))_{|s|\leq |t|,k'\leq k-1,j'=\un{j+1}}\big)\leadsto (\cE^{k'}_{j'}(s))_{|s|\leq |t|,k'\leq k,j'=\un{j}}.
\ee
Therefore, supposing known $(\cE_{j'}^{k'})_{k'\leq k,j'\leq j}$,
\be\nn
(\cE^{k'}_{j'}(s))_{s\leq t,k'\leq k-2,j'=\un{j+2}}\leadsto 
(\cE^{k'}_{j'}(s))_{s\leq t,k'\leq k-1,j'=\un{j+1}}\leadsto \cE^k_j(t).
\ee
and by iteration
\be\nn
(\cE^{0}_{j'}(s))_{s\leq t,j'=\un{j+k}}\leadsto \cE^k_j(t)
\ee
so that, by \eqref{leads1},
\be\nn
(\cE^{0}_{j'}(0))_{j'=\un{j+k}}\leadsto \cE^k_j(t).
\ee
\vskip 1cm
We just proved the following result.
\begin{Prop}\label{asympt}
For any $j=\unN,t\geq 0,k=0,\dots$, let $\cE^k_j(t)$ be the solution of \eqref{soluk}. Then $\cE^k_j(t)$ is determined by the values $\cE^{k'}_{j'}(0)$ for $0\leq k'\leq k$, $1\leq j'\leq j+k$.
\end{Prop}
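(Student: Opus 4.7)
The plan is to prove Proposition \ref{asympt} by a double induction: an outer induction on the half-power order $k$, combined with an inner iteration on the marginal index $j$ for each fixed $k$. The Volterra form \eqref{soluk} displays $\cE_j^k(t)$ as an explicit affine functional of the initial datum $\cE_j^k(0)$ and of the three histories $\{\cE_{j-2}^k(s),\,\cE_{j+1}^{k-1}(s),\,\cE_{j-1}^{k-1}(s):0\le s\le t\}$; the propagator $U_j(t,s)$ defined by \eqref{jkinj} is built from the known mean-field solution $F(t)$ and involves no $\cE$'s, so it produces no additional dependence on initial data. The task therefore reduces, for each $(j,k)$, to tracking which pairs $(j',k')$ at time zero feed into $\cE_j^k(t)$.

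For the base case $k=0$, the second and third source terms in \eqref{soluk} vanish by the convention $\cE_{\cdot}^{-1}\equiv 0$, so the equation collapses to \eqref{ind1}. Together with $\cE_{0}^0\equiv 1$ and $\cE_{-1}^0\equiv 0$ this is a closed triangular recursion in $j$: once $\cE_1^0(t)$ and $\cE_2^0(t)$ are recovered from their initial data, one successively determines each $\cE_j^0(t)$ from $\cE_j^0(0)$ and the already-constructed $\cE_{j-2}^0(\cdot)$. Hence $\cE_j^0(t)$ is determined by $\{\cE_{j'}^0(0)\}_{1\le j'\le j}$, which is exactly the statement of the proposition for $k=0$ since $j+k=j$ in that case.

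For the inductive step, assume the conclusion holds at all orders up to $k-1$. Fix $k\ge 1$; since $\cE_0^k\equiv 0$ when $k\ge 1$, one solves \eqref{soluk} successively for $j=1,2,3,\dots$. The source terms $\cE_{j+1}^{k-1}(s)$ and $\cE_{j-1}^{k-1}(s)$ are, by the outer hypothesis, determined by initial data $\cE_{j''}^{k'}(0)$ with $k'\le k-1$ and $j''\le (j+1)+(k-1)=j+k$, the worst case coming from the $\cE_{j+1}^{k-1}$ source. The sibling source $\cE_{j-2}^k(s)$ has the same order $k$ but strictly smaller marginal index, hence has been built at a previous inner step and, by the inner induction on $j$ combined with the hypothesis on $k$, depends only on initial data with $k'\le k$ and $j''\le (j-2)+k<j+k$. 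Combining with $\cE_j^k(0)$ itself, we conclude that $\cE_j^k(t)$ is determined by $\{\cE_{j'}^{k'}(0):0\le k'\le k,\ 1\le j'\le j+k\}$.

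The main bookkeeping obstacle is ensuring that the spatial index $j''$ does not grow faster than one per decrement of $k$ when the recursion is unwound through all $k$ layers. The only source that could jeopardize this is $\cE_{j+1}^{k-1}$, which indeed saturates the rate; both $\cE_{j-1}^{k-1}$ and the same-order $\cE_{j-2}^k$ contribute strictly below $j+k$. This is why the bound $j'\le j+k$ is tight yet never violated. No analytic estimate is required for Proposition \ref{asympt}: the argument is purely algebraic and rests entirely on the triangular structure of \eqref{soluk} together with the boundary conventions $\cE_0^k=\delta_{k,0}$ and $\cE_{-1}^k=\cE_{-2}^k=\cE_j^{-1}=0$.
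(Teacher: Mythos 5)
Your proof is correct and follows essentially the same route as the paper: unroll the Duhamel formula \eqref{soluk}, note that the $\cE_{j-2}^k$ source keeps the half-power order $k$ fixed while strictly lowering the marginal index (so it terminates via the boundary conventions $\cE_0^k=\delta_{k,0}$ and $\cE_{-1}^k=\cE_{-2}^k=0$), and observe that only the $\cE_{j+1}^{k-1}$ source raises the marginal index, and it does so by exactly one per decrement of $k$, which yields the bound $j'\le j+k$. The paper phrases this as an iterated chain of ``leads-to'' relations starting from \eqref{leads1} while you organize it as a nested induction (outer on $k$, inner on $j$); these are the same argument and your formulation is if anything a bit tidier in its bookkeeping.
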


Formula \eqref{soluk} will give easily the following result.
\begin{Prop}
Let $\cE^{k'}_{j'}(0)=0$ for $j'\leq j, k'\leq k,j'+k'$  odd. Then $\cE^k_j(t)=0$ for $j+k$ odd.
\end{Prop}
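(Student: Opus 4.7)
The plan is a straightforward double induction based on the observation that the recursion \eqref{soluk} preserves the parity of $j+k$.

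First I would record the key parity observation. The four index shifts appearing on the right-hand side of \eqref{soluk} move $(j,k)$ to $(j-2,k)$, $(j+1,k-1)$, $(j-1,k-1)$, and the initial value $\cE_j^k(0)$ sits at $(j,k)$ itself. All of these pairs $(j',k')$ satisfy $j'+k' \equiv j+k \pmod 2$. Consequently, if $j+k$ is odd then every source on the right of \eqref{soluk} has odd parity index, and so does $\cE_j^k(0)$, which is zero by hypothesis.

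Next I would set up the induction. The outer induction is on $k$, the inner on $j$, motivated by the fact that in \eqref{soluk} the term $\Delta_j^=\cE_{j-2}^k$ lies at the same level $k$ but at smaller $j$, while $\Delta_j^\pm\cE^{k-1}_{j\pm 1}$ lie at the previous level $k-1$.

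\textbf{Base case ($k=0$):} Here \eqref{soluk} reduces, using the conventions $\cE_{j'}^{-1}\equiv 0$, to the closed equation
\[
\cE_j^0(t)=U_j(t,0)\cE_j^0(0)+\int_0^t U_j(t,s)\Delta_j^=\cE_{j-2}^0(s)\,ds.
\]
For $j=1$, the source vanishes ($\cE_{-1}^0 \equiv 0$) and $\cE_1^0(0)=0$, giving $\cE_1^0(t)=0$. For general odd $j\geq 3$, the inner inductive hypothesis $\cE_{j-2}^0(t)=0$ and $\cE_j^0(0)=0$ yield $\cE_j^0(t)=0$.

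\textbf{Inductive step ($k\geq 1$):} Assume $\cE_{j'}^{k'}(t)=0$ for all $k'<k$ and $j'+k'$ odd. Fix $j$ with $j+k$ odd. By induction on $j$ (starting from $j=0$ where $\cE_0^k(t)=\delta_{k,0}=0$ since $k$ is odd, and $j=-1,-2$ where the conventions give zero), each of the three source terms in \eqref{soluk} vanishes:
\begin{itemize}
\item $\cE_{j-2}^k(s)=0$ by the inner inductive hypothesis (same $k$, $(j-2)+k$ odd);
\item $\cE_{j+1}^{k-1}(s)=0$ by the outer inductive hypothesis ($(j+1)+(k-1)=j+k$ odd);
\item $\cE_{j-1}^{k-1}(s)=0$ likewise ($(j-1)+(k-1)=j+k-2$ odd).
\end{itemize}
Together with $\cE_j^k(0)=0$, \eqref{soluk} gives $\cE_j^k(t)=0$.

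There is no real obstacle: the entire content is the parity check on the four indices appearing in the recursion, after which the induction runs itself. The mild bookkeeping issue is making sure the degenerate indices $j\in\{-2,-1,0\}$ are handled by the conventions in \eqref{eqrecur}, all of which are consistent with the parity claim (in particular $\cE_0^k=\delta_{k,0}$ vanishes exactly when $k$ is odd).
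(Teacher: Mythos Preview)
Your proof is correct and follows essentially the same approach as the paper: both argue by induction first on $k$ and then on $j$, using that the recursion \eqref{soluk} only couples indices of the same parity in $j+k$. Your write-up is simply a cleaner, more explicit version of the paper's argument, which sketches the same induction (``$\cE_1^0(t)=0$, then $\cE_j^0(t)=0$ for $j$ odd, then $\cE_2^1(t)=0$, then $\cE_j^1(t)=0$ for $j$ even, \dots'') without spelling out the parity observation or the formal double induction.
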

\begin{proof}
Let us suppose $\cE^{k'}_{j'}(0)=0$ for $j'\leq j, k'\leq k,j'+k'$  odd.
By \eqref{e11} we have that $\cE_1^0(t)=0$ since $\cE_1^0(0)=0$. 
Therefore, by induction on $j$ in \eqref{soluk}, $\cE^0_j(t)=0$ for all $j$ odd.

\noindent Since $\cE_0(t):=1,\ \cE^j_0(t)=0, j>0$,  so that $\cE^1_2(t)=0$ by \eqref{soluk} and therefore 
$\cE^1_j(t)=0$ for all $j$ even, since then $j\pm 1$ is odd, and therefore $\cE^0_{j\pm 1}(s)=0$ .
This gives $\cE^2_1(t)=0$ by \eqref{soluk} and so on.
\end{proof}
\begin{Cor}
$E_j(t)$ has an asymptotic expansion in powers of $N^{-1}$ with leading order $N^{-[\tfrac{j+1}2]}$ when $E_j(0)$ is so.
\end{Cor}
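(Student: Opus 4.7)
The plan is to combine three ingredients all established immediately before the corollary, so that only bookkeeping remains.

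First, I would translate the hypothesis on $E_j(0)$ into parity conditions on the rescaled coefficients. The assumption is that $E_j(0)$ admits an asymptotic expansion in integer powers of $N^{-1}$ starting at order $N^{-[(j+1)/2]}$, say $E_j(0)\sim\sum_{m\geq [(j+1)/2]} c_m^j(0)\,N^{-m}$. Setting $\cE_j(0):=N^{j/2}E_j(0)$ as in \eqref{resc} and matching with $\cE_j(0)\sim\sum_{\ell\geq 0}\cE_j^\ell(0)\,N^{-\ell/2}$, the integer-power structure forces $\cE_j^\ell(0)=0$ whenever $j+\ell$ is odd, since such a term would correspond to a half-integer power of $N^{-1}$ in $E_j(0)$. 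The nonzero coefficients are then $\cE_j^{2m-j}(0)=c_m^j(0)$ for $m\geq[(j+1)/2]$, consistent with $\ell\geq 0$.

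Second, I would invoke the preceding Proposition (the parity-preservation statement): since the hypothesis $\cE_{j'}^{k'}(0)=0$ for $j'+k'$ odd holds for every $j'\leq j$ and every $k'$ involved in determining $\cE_j^k(t)$ (by the inductive structure recorded in Proposition \ref{asympt}, which shows $\cE_j^k(t)$ depends only on $\cE_{j'}^{k'}(0)$ with $j'\leq j+k$, $k'\leq k$), we conclude that $\cE_j^k(t)=0$ whenever $j+k$ is odd, for all $t\in\mathbb{R}$.

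Third, I would plug this parity property into the full asymptotic expansion provided by Theorem \ref{main}, namely $E_j(t)\sim\sum_{k\geq 0}\cE_j^k(t)\,N^{-(j+k)/2}$. Since only indices $k$ with $k\equiv j\pmod 2$ contribute, the exponents $(j+k)/2$ are all integers. Writing $k=2m-j$ when $j$ is even (so $m\geq j/2$) or $k=2m-j$ when $j$ is odd (so $m\geq (j+1)/2$), we regroup into $E_j(t)\sim\sum_{m\geq[(j+1)/2]}c_m^j(t)\,N^{-m}$ with $c_m^j(t):=\cE_j^{2m-j}(t)$, which is exactly an asymptotic expansion in integer powers of $N^{-1}$ with leading order $N^{-[(j+1)/2]}$, as claimed.

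There is no real obstacle here: once Theorem \ref{main} and the parity Proposition are in hand, the corollary is a matter of rewriting half-integer powers as integer ones and tracking the minimal admissible index. The only thing to be slightly careful about is making explicit that the parity hypothesis on $\cE_{j'}^{k'}(0)$ must hold for all $j'\leq j+k$ (not merely $j'\leq j$), which is why one imposes the integer-power structure on all initial $E_{j'}(0)$ simultaneously; this is compatible with the blanket assumption \eqref{condint} and the standing convention on the initial data throughout the section.
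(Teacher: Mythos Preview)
Your proposal is correct and takes essentially the same approach as the paper: the Corollary is stated there without proof, as an immediate consequence of the preceding parity Proposition together with the asymptotic-expansion structure recorded in Theorem~\ref{main}, and you have simply spelled out that bookkeeping. Your closing caveat about needing the parity hypothesis on all $\cE_{j'}^{k'}(0)$ with $j'\leq j+k$ (not just $j'\leq j$) is well taken and matches how the paper's inductive proof of the parity Proposition actually proceeds.
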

\section{Estimates and proof of Theorem \ref{main}}\label{esti}


In order to simplify the expressions, we will first suppose that $\tfrac{\norm{V}_{L^\infty}}\hbar= 1$.
Note that one has therefore the following estimates:
\be\label{endowq}
\| D_j \| , \| D^1_j\| \leq j  \mbox{ and }  \| D_j^{-1}\|,
\| D_j^{-2}\|, \|D^{-1}_1(E_0)\|,  \|D^{-2}_2(E_0)\|
 \leq \frac {j^2} N.
\ee

Let us first notice that \eqref{eqhieraerror111} expressed on the $\cE_j$s reads
\be\label{hypthm21}
\partial_t\cE_j=
H_j\cE_j+N^{-\frac12}\Delta^+_j\cE_{j+1}
+N^{-\frac12}\Delta^-_j\cE_{j-1}+\Delta^=_j\cE_{j-2}
\ee
and that \eqref{condint} and \eqref{timet} can be rephrased as
\be\label{thm21}
\norm{\cE_j(0)}\leq (Aj^2)^{j/2}\Longrightarrow
\norm{\cE_j(t)}\leq (A_tj^2)^{j/2},\ A_t=C'Ae^{Ct}
\ee
for some explicit constants $A',C$.

We get
\bea
\partial_t\cE_j^k(t)
&=&
H_j(t)\cE_j^k(t)+\Delta_j^=\cE_{j-2}^k(t)+N^{-\frac12}(\Delta_j^+\cE_{j+1}^{k-1}(t)+\Delta_j^-\cE_{j-1}^{k-1}(t))\nn\\
&=&
H_j(t)\cE_j^k(t)+\Delta_j^=\cE_{j-2}^k(t)+\Delta_j^+\cE_{j+1}^{k}(t)+\Delta_j^-\cE_{j-1}^{k}(t))\nn\\
&&
+N^{-\frac12}((\Delta_j^+(\cE_{j+1}^{k-1}(t)-\cE_{j+1}^{k}(t))+(\Delta_j^-(\cE_{j-1}^{k-1}(t)
-\cE_{j-1}^{k}(t))),\nn
\eea
\newcommand{\bcE}
{\bar{{\cE}}}
and, calling $\bcE^{ n}_j=\suml_{k=0}^nN^{-k/2}\cE^k_j$, one easily check that
\bea
\partial_t\bcE^n_j(t)
&=&
H_j(t)\bcE_j^n(t)+\Delta_j^=\bcE_{j-2}^n(t)+N^{-\frac12}(\Delta_j^+\bcE_{j+1}^{n}(t)+\Delta_j^-\bcE_{j-1}^{n}(t))\nn\\
&&
-
 N^{-\frac{n+1}2}(\Delta_j^+(\cE_{j+1}^{n}(t))+\Delta_j^-(\cE_{j-1}^{n}(t)).\label{hgfdd}
\eea
\newcommand{\bbcE}{R}
Therefore $\bbcE^n_j:=\cE_j-\bcE^n_j$ 
satisfies the equation
\bea
\partial_t\bbcE^n_j(t)
&=&
H_j(t)\bbcE_j^n(t)+\Delta_j^=\bbcE_{j-2}^n(t)+N^{-\frac12}(\Delta_j^+\bbcE_{j+1}^{n}(t)+\Delta_j^-\bbcE_{j-1}^{n}(t))\nn\\
&&
+
 N^{-\frac{n+1}2}(\Delta_j^+(\cE_{j+1}^{n}(t))+\Delta_j^-(\cE_{j-1}^{n}(t))\label{bhgfdd}
\eea
%
Let us define the mapping
 $$U^N_j(t,s): (\cE_j(s))_{j=\unN}\mapsto U^N_j(t,s)\big((\cE_j(s))_{j=\unN}\big):=\cE_j(t).$$
 In other words, the family 
 $(U^N_j(t,s))_{j=\unN} $ solves the equation:
\bea
\partial_tU^N_j(t,s)
&=&
H_j(t)U^N_j(t,s)+\Delta_j^=U^N_{j-2}(t,s)\nn\\
&&+N^{-\frac12}(\Delta_j^+U^N_{j+1}(t,s)+\Delta_j^-U^N_{j-1}(t,s)),\nn\\
 U^N_j(s,s)&=&I
.\nn
\eea
Hence, the solution of \eqref{bhgfdd} reads
\bea\label{houlala}
\bbcE^n_j(t)
&=&U^N_j(t,0)((\bbcE^n_j(0))_{j=\unN})\nn\\
&+&N^{-\frac{n+1}2}\int_0^tU^N_j(t,s)
((\Delta_j^+(s)\cE_{j+1}^{n}(s))+\Delta_j^-(s)\cE_{j-1}^{n}(s))_{j=\unN})
ds
\eea
with again the same convention on negative indices.

By hypothese, $\bbcE_j^n(0)=0$ since $\cE_j^n(0)=\delta_{n,0}\cE^0_j(0)$.

Let us suppose now that 
\be\label{e1}
\norm{
\Delta_j^+(\cE_{j+1}^{n}(s))+\Delta_j^-(\cE_{j-1}^{n}(s)
}\leq C_n(s)(C'_n(s)j^2)^{j/2},\ \mbox{ }|s|\leq |t|,
\ee
 for two increasing functions 
 $C_n(s),C'_n(s),C'_n(s)\geq1$,
  Then \eqref{thm21} implies that
  $$
  \norm{U^N_j(t,s)
((\Delta_j^+(s)\cE_{j+1}^{n}(s))+\Delta_j^-(s)\cE_{j-1}^{n}(s))_{j=\unN})
}
\leq 
C_n(s)
(C'C'_n(s)e^{C|t|}j^2)^{j/2},\nn
$$
and thus
  \bea
  \norm{\cE_j(t)-\bcE^n_j(t)}&=&\norm{\bbcE^n_j(t)}\nn\\
  &=&\norm{\int_0^tU^N_j(t,s)
((\Delta_j^+(s)\cE_{j+1}^{n}(s))+\Delta_j^-(s)\cE_{j-1}^{n}(s))_{j=\unN})
ds}\nn\\
&\leq&
N^{-\frac{n+1}2}
D_n(t)
(
D'_n(t)\nn
j^2)^{j/2},
\eea
where
\be\label{dnt}
D_n(t)=
tC_n(t)
\mbox{ and } D'_n(t)=C'C'_n(t)e^{C|t|}.
\ee

It remains to prove an estimate like \eqref{e1}. 

We will obtain such an estimate by iterating \eqref{soluk}. We first remark that, since $e^{K^j+T_j/N}$ is unitary and $\norm{D_j}\leq j$, the Gronwall Lemma gives that
\be\label{gro}
\norm{U_j(t,s)}\leq e^{j|t-s|}.
\ee
We will use
\bea\label{readily}
&&\prod_{i=0}^{m} e^{ (j+i) (t_{i}-t_{i+1}) }  \leq e^{(j+m)|t_{m+1}-t_0|}\ \mbox{ for any }(t_i)_{i=0,\dots,m} \mbox{ (see \cite{PPS}}),\label{hyp1}\\
&&\norm{\Delta^\pm},\norm{\Delta^=}\leq j^2,\label{hyp2}\\
&& \int_0^tdt_1\int_0^{t-1}dt_2\dots\int_0^{t_{n-1}}dt_n=\frac{t^n}{n!}.\label{hyp4}
\eea
Let us remind that we have $\cE^k_0(t)=\delta_{k,0}$ for all $t$ $\cE_j^k(0)=\delta_{k,0}N^{\frac j2}E_j(0)$ (resp. $\delta_{k,1}N^{\frac j2}E_j(0)$) if $j$ is even (resp. $j$ is odd) with $\norm{E_j(0)}_1\leq (A\tfrac{j^2}N)^{j/2}$. 
Therefore \eqref{soluk} reads:
$$
\left\{
\begin{array}{rcl}
\cE^0_j(t)&=&U_j(t,0)\cE^0_j(0)+\int_{s=0}^tU_j(t,s)\Delta^=_j\cE^0_{j-2}(s)ds,\ \cE^1_j(t)=0,\ \mbox{ j even}\nn\\
\cE_j^1(t)&=&U_j(t,0)\cE^1_j(0)+\int_{s=0}^tU_j(t,s)\Delta_j^=\cE_{j-2}^1(s)ds\nn\\
&&+\int_0^tU_j(t,s)(\Delta_j^+\cE_{j+1}^{0}(s)+\Delta_j^-\cE_{j-1}^{0}(s))ds,\ \cE^0_j(t)=0,\ \mbox{ j odd},   \nn
\\
\cE_j^k(t)&=&\int_{0}^tU_j(t,s)(\Delta_j^=\cE_{j-2}^k(s)+\Delta_j^+\cE_{j+1}^{k-1}(s)+\Delta_j^-\cE_{j-1}^{k-1}(s))ds,\ k>1.   \nn
\end{array}
\right.
$$
Let us note first that \eqref{e11k} for $k=0$ is verbatim \eqref{eqhieraerror111} after replacing $E_j$ by $\cE_j^0$ and $D_j^{\pm}$ by $0$. On the other side, we know by Remark 3.2 in \cite{PPS}, that the proof of Theorem 2.1 in \cite{PPS}, Theorem \ref{main} in the present paper,  depends on $D_j^{\pm}$ only through its norm $\norm{D_j^{\pm}}$ required to be bounded by $j^2$. Therefore we get immediately, for $j$ even,
\be\label{eoj}
\norm{\cE^0_j(t)}\leq
(C'Ae^{C|t|}j^2)^{j/2},
\ee
and thus, by  \eqref{gro}, \eqref{hyp2} and  using $j^\lambda\leq e^{j\lambda/e},\lambda>0$,
\be\label{jodd}
\norm{\int_0^tU_j(t,s)(\Delta_j^+\cE_{j+1}^{0}(s)+\Delta_j^-\cE_{j-1}^{0}(s))ds}_1
\leq
2t(C'Ae^{4/e}e^{(C+1)|t|}j^2)^{j/2},
\ee
and the same argument as the one which leads to \eqref{eoj}, we get, for $j$ odd,
\be\label{e1j}
\norm{\cE^1_j(t)}\leq
(1+2|t|)(C'Ae^{4/e}e^{(C+1)|t|}j^2)^{j/2}.
\ee

For $k>1$ we will estimate $\norm{cE_j^k(t)}_1$ by iterating the second line $M$ times, we will end up with the sum of 
$3^M$ terms involving the values $\cE^{k-s-t}_{j-2r+s-t}$ for any (r,s,t) such that $M=r+s+t$ with the two constraints $k-s-t\geq 0,\ j-2r+s-t\geq 0$. Using the first constraint we see that
$$
j-2r+s-t\leq j-2r+k\leq j-2(M-k)+k=j-2M+3k.
$$
So that, taking $M=[{(j+3k)}/2]$, the second constraint reduces to $j-2r+s-t= 0$ and the first one to $s+t=k$ since  $\cE_0^k=\delta_{k,0}$.

We easily (and very roughly) estimate, using respectively $M=[{(j+3k)}/2]$, 
\eqref{hyp4}, \eqref{hyp1} and \eqref{hyp2},
\bea
\norm{\cE^k_j(t)}
&\leq&
3^{(j+3k)/2}\frac{|t|^{(j+3k)/2}}{((j+3k)/2)!}e^{3(j+k)|t|/2}((j+k)^2)^{\frac{j+3k}2}\nn
\eea
so that, using $(1+k/j)^j\leq e^k$, $j^\lambda\leq e^{j\lambda/e},\lambda>0$ and $n!\geq   n^ne^{-n}$\ \footnote{since 
$\log{n!}=\sum\limits_{j=2}^n\log{j}\geq\int_{1}^n\log(x)dx=\big[x\log{x}-x\big]_1^n=n\log{n}-n+1$.}, we get
\be\label{ek}
\norm{\cE^k_j(t)}
\leq
(2|t|e^{|t|+\tfrac53}(3+k))^{3k/2}
(3e^{6k/e}|t|e^{3|t|}j^2)^{j/2}
,\  k>1\nn
\ee
and, 
\be\label{eok}
\norm{\cE^k_j(t)}
\leq
(2(1+|t|)e^{|t|+\tfrac53}(3+k))^{3k/2}
((3e^{\tfrac{6k}e}|t|e^{3|t|}+(1+2|t|)C'Ae^{4/e}e^{(C+1)|t|})j^2)^{j/2}
,\  k\geq 0.\nn
\ee
We conclude by \eqref{hyp2}:

%
\be\label{deltaeok}
\norm{
\Delta_j^+(\cE_{j+1}^{k}(s))+\Delta_j^-(\cE_{j-1}^{k}(s)
}\leq 
C_k(s)(C'_k(s)j^2)^{j/2}
\ee
%
with, after restoring the dependence in $\tfrac{{\norm{V}_{L^\infty}}}\hbar$ by the same argument as  in \cite{PPS}, Section 3, namely a rescaling of the time and the kinetic part of the Hamiltonian,
\be\label{csas}
\left\{
\begin{array}{rcl}
C_k(s)&=&
4e
(2(1+{\tfrac{|s|{\norm{V}_{L^\infty}}}\hbar})e^{{\tfrac{|s|{\norm{V}_{L^\infty}}}\hbar}}k)^{3k/2}\\
&&\times 
(3e^{\tfrac{6k+4}e}{\tfrac{|s|{\norm{V}_{L^\infty}}}\hbar}e^{3{\tfrac{|s|{\norm{V}_{L^\infty}}}\hbar}}+C'Ae^{C{\tfrac{|s|{\norm{V}_{L^\infty}}}\hbar}})^{1/2}
\\
C'_k(s)&=&
(3e^{\tfrac{6k}e}{\tfrac{|s|{\norm{V}_{L^\infty}}}\hbar}e^{3{\tfrac{|s|{\norm{V}_{L^\infty}}}\hbar}}+(1+2\tfrac{|s|{\norm{V}_{L^\infty}}}\hbar)C'Ae^{4/e}e^{(C+1){\tfrac{|s|{\norm{V}_{L^\infty}}}\hbar}})e^{6/e}
\end{array}
\right.
\ee
Therefore \eqref{e1} is satisfied and Theorem \ref{main} is proven.

The values of the two constants $D_n(t),D'_n(t)$ in \eqref{dnt} can be expressed out of \eqref{csas} by taking, by Theorem \ref{mainpps}, $C=\sup{(B_1,C_1)},C'=\sup{(B_2,C_2)}$ where $B_1,C_1,B_2,C_1,C_2$ are given in Theorem 2.2. in \cite{PPS}.

\begin{Rmk}\label{crucial}We  see that the  properties \eqref{E0q}-\eqref{endowq}, together with \eqref{normpropag}, are actually the only ones being used in the proof of Theorem \ref{main}.
\end{Rmk}

\section{Explicit computations of first orders
}\label{explicit}

We have
\be\nn
\partial_tU^0_1(t,s)=\frac1{i\hbar}[-\hbar^2\Delta+V_F,U^0_1(t,s)]
+\frac1{i\hbar}[V_{U^0_1(t,s)},F]
\ee
where, in the last term, $V_{U^0_1(t,s)}$ acts on $E_1(s)$ as $V_{U^0_1(t,s)E_1(s)}$.

More generally,
\\
\be\nn
\partial_tU^0_j(t)=\frac1{i\hbar}[-\hbar^2\Delta_{\bR^{jd}}+V_F^{\otimes j},U^0_j(t)]+
P(U^o_j,F)
\ee
where
\be\nn
(P(U^o_j,F)E_j)(Z_j)=
\ee
$$
\sum_i\int dx(V(x_i-x)-V(x_i'-x))(U^0_j(t,s)E_j(Z_j^{\neq i},(x,x))F(x_i,x_i'),
$$

 that is
$$
(P(U^0_j,F)E_j)=\suml_{i=1}^j[V\star_i(U^0_j(t,s)E_j),F]_i.
$$


Finally
\be\label{qE20}\nn
\cE^0_2(t)(Z_2)=\int_0^t\int_{\bR^{2d}}dsdZ'_2U_2(t,s)(Z_2,Z'_2)V(x'_1-x'_2)\FF(s)(z'_1)\FF(s)(z'_2) dsdZ'_2
\ee
and
\bea\label{qE11}\nn
\cE^1_1(t)&=&\int_0^tU_1(t,s)Q(\FF,\FF) ds\nn\\
&&+(1-\tfrac1N)\int_0^t\int_0^s
U_1(t,s)\tr^2[VU_2(s,u)V\FF(u)\otimes\FF(u)]dsdu\nn\\
\eea
\section{The Kac and ``soft spheres" models}\label{kac}
In this section we consider the two following classes of mean field models (see \cite{PPS} for details).

\medskip
\noindent
$\bullet$ {\it Kac model.} In this model, the $N$-particle system evolves according to a stochastic process. To  each particle $i$, we associate a velocity $v_i \in \R^3$.
\newcommand{\calv}{\cal V}
The vector $\calv_N=\{v_1, \cdots, v_N\}$ changes by means of two-body collisions at random times, with random scattering angle.
The probability density $F^N(\calv_N,t)$ evolves according to the forward Kolmogorov equation 
\be\label{kolmo}
\pa_t F^N=\frac 1N \sum_{i<j} \int d\omega B(\omega; v_i-v_j)\{F^N(\calv_N^{i,j})-F^N(\calv_N)\}\;,
\ee
where $\calv_N^{i,j}=\{v_1, \cdots,v_{i-1}, v_i', v_{i+1}, \cdots, v_{j-1},v_j', v_{j+1}, \cdots, v_N\}$ and the pair $v_i' ,v_j'$ gives the outgoing velocities after a collision with scattering (unit) vector $\o$ and incoming velocities $v_i, v_j $. $\tfrac{B(\omega; v_i-v_j)}{|v_i-v_j|}$ is the differential cross-section of the two-body process.
The resulting mean-field kinetic equation is the homogeneous Boltzmann equation
\be\label{eqboltzz}
\pa_t F(v) = \int dv_1 \int d\omega B(\omega; v-v_1) \{ F(v') F(v_1') -F(v) F(v_1) \}\;.
\ee

\smallskip
\noindent
$\bullet$  {\it `Soft spheres' model.} A slightly more realistic variant, taking into account
the positions of particles $X_N= \{x_1, \cdots, x_N\} \in \R^{3N}$ and relative transport,
was introduced by Cercignani \cite{Cer} and further investigated in \cite{LP}.
The  probability density $F^N(X_N, V_N,t)$ evolves according to the equation
\bea
\pa_t F^N + \suml_{i=1}^N v_i \cdot \nabla_{x_i} F^N &=&
\frac 1N \suml_{i<j} \, h\left(|x_i-x_j|\right) B\left(\frac{x_i-x_j}{|x_i-x_j|}; v_i-v_j\right)
\nn \\
&&
\times\{F^N(X_N ,V_N^{i,j})-F^N(X_N,V_N)\}\;. \label{kolomosoft}
\eea
Here $h:\R^+ \to \R^+$ is a positive function with compact support. 
Now a pair of particles collides at a random distance with rate 
modulated by $h$. 
The associated mean-field kinetic equation is the Povzner equation
\bea\label{eqpovzner}
\pa_t F(x,v) + v\cdot \nabla_{x}F (x,v) &=&\int dv_1 \int dx_1\, h(|x-x_1|) B\left(\frac{x-x_1}{|x-x_1|}; v-v_1\right)  \nn \\
&&\times\{ F(x, v') F(x_1, v_1') -F(x, v) F(x_1, v_1) \}, \nn 
\eea
which can be seen as an $h-$mollification of the inhomogeneous Boltzmann equation (formally obtained when $h$
converges to a Dirac mass at the origin). 
 Both classes have be treated in \cite{PPS} and Theorem \ref{mainpps} apply to them, in the following sense.
 
 The underlying space $\lon$ is now $L^1(\bR^d,dv)$ (resp. $L^1(\bR^{2d},dxdv)$)) for the Kac model (resp. soft spheres) both  endowed with the $L^1$ norms $\norm{\cdot}_1$. For $F^N\in\lone{N}$, $F^N_j\in\lone{j}$ is defined by $$F^N_j(Z_j)=\int_\Omega F^N(z_1,\dots,z_j,z_{J+1},\dots,z_N)dz_{j+1}\dots dz_N$$ for $Z_n=(z_1,\dots,z_n),n=\unN$ with $z_i=v_i\in\bR^d,\Omega=\bR^{(N-j)d}$ (resp. $z_i=(x_i,v_i)\in\bR^{2d},\Omega=\bR^{2(N-j)d}$) for the Kac (resp. soft spheres) model. 
 
 In both cases $E_j(t)$ is defined by \eqref{deferrorq}, inverted by \eqref{invdeferrorq}, and it was proven in \cite{PPS} that Theorem \ref{mainpps} holds true verbatim in both cases.
 
 Stating now the dynamics driven by \eqref{kolmo} and \eqref{kolomosoft} under the form \eqref{Nsop} with $K^N=0$ (resp. $K^N=-\sum\limits_{i=\unN}v_i\partial_{x_i}$) for the Kac (resp. soft spheres) model and $V^N$ given by the right hand-sides of \eqref{kolmo},\eqref{kolomosoft} respectively, one sees immediately  that the proofs contained in Sections \ref{recur},\ref{esti} remain valid after an elementary redefinition of the operators $D_j,D_j^{-1},D_j^{-2}$ in \eqref{ddefj}-\eqref{defd=} 
consisting in removing the bottom and overhead straight lines in the right hand sides and, by a slight abuse of notation,  identifying functions with their evaluations. The convention \eqref{E0q} remains verbatim the same, together with the estimates
\be\label{endow}
\| D_j \| , \| D^1_j\| \leq j  \mbox{ and }  \| D_j^{-1}\|,
\| D_j^{-2}\|, \|D^{-1}_1(E_0)\|,  \|D^{-2}_2(E_0)\|
 \leq \frac {j^2} N.
\ee

 Therefore, by Remark \ref{crucial}, the statements contained in Theorem \ref{main} and Corollary  \ref{corcormain} hold true, in both cases, verbatim. Moreover defining $F^{N,n}_j$  by \eqref{FNnj} in both cases, Corollary \ref{cormain} reads now as follows
 
 \begin{Cor}\label{cormaink}[Kac case]
Let $\FF^N(t)$ the solution of the $N$ body system \eqref{kolmo} (resp. \ref{kolomosoft})  with initial datum $\FF^N(0)=\FF^{\otimes N}$, $0<\FF\in L^1(\bR^{d})), \int\limits_{\bR^d}f(v)dv=1$ (resp, $0<\FF\in L^1(\bR^{2d})), \int\limits_{\bR^{2d}}f(x,v)dxdv=1$), and $\FF(t)$ the solution of the homogeneous Boltzmann equation \eqref{eqboltzz} (resp. the Povzner equation\eqref{eqpovzner}) with initial datum $\FF$.

Then, in both cases, for all $n\geq1$ and $N\geq 4(eA^{2n}_tj)^2$,
\bea
\norm{\FF^N_j(t)-
F^{N,n}_j(t)
}_1
\leq
N^{-n-\frac12}\ 
\tfrac{2tC_{2n}(t)eA^{2n}_tj}{\sqrt N}.\nn
\eea
\end{Cor}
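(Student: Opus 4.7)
The plan is to mirror, almost verbatim, the proof of Corollary \ref{cormain} from the quantum case, since the author has arranged (via Remark \ref{crucial} and the estimate \eqref{endow}) for all ingredients to transfer unchanged to the Kac and soft-spheres settings. The only genuine difference is the underlying space: in the quantum case $\lon = \cL^1(L^2(\bR^d))$ with integral kernels, while here $\lon = L^1(\bR^d, dv)$ (resp.\ $L^1(\bR^{2d}, dx\,dv)$) with probability densities. Importantly, the inversion formula \eqref{invdeferrorq} keeps the same combinatorial form, with $F^N(z)$ now interpreted as a density rather than the kernel of an operator.

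First I would apply the inversion identity \eqref{invdeferrorq} to $F^N_j(t)$ and subtract from it the truncated object $F^{N,n}_j(t)$ defined by \eqref{FNnj}. Because the only thing replaced inside \eqref{FNnj} is $E_{j-k}$ by $E^n_{j-k}$, the difference telescopes into
\[
F^N_j(t) - F^{N,n}_j(t) = \sum_{k=1}^{j} \sum_{1\le i_1 < \cdots < i_k \le j} F(t)(z_{i_1}) \cdots F(t)(z_{i_k}) \bigl(E_{j-k}(t) - E^n_{j-k}(t)\bigr)(Z_j^{/\{i_1,\dots,i_k\}}).
\]
Taking $L^1$-norms, using $\|F(t)\|_1 = 1$ and counting subsets, this produces $\sum_{k=1}^j \binom{j}{k}\|E_{j-k}(t) - E^n_{j-k}(t)\|_1$, which after reindexing $k \mapsto j-k$ is exactly the expression that appears in the quantum proof.

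Next I would invoke Theorem \ref{main} (which applies verbatim by Remark \ref{crucial}) to bound
\[
\|E_k(t) - E^n_k(t)\|_1 \le N^{-n-\frac12} D_{2n}(t) \bigl(D'_{2n}(t)\, k^2/N\bigr)^{k/2},
\]
and then follow the quantum computation line by line: use $\binom{j}{k} \le j(j-1)\cdots(j-k+1)/k!$, apply the Stirling-type bound $k^k/k! \le e^k/\sqrt{2\pi k}$, and collect powers to obtain a geometric sum $\sum_{k=1}^j (jeA^{2n}_t/\sqrt N)^k$ with $A^{2n}_t = D'_{2n}(t)$. The hypothesis $N \ge 4(eA^{2n}_t j)^2$ makes the ratio at most $1/2$, so the geometric series is bounded by $2 \cdot jeA^{2n}_t/\sqrt N$, yielding the claimed estimate.

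I do not expect any real obstacle, since the combinatorial identity \eqref{invdeferrorq} and the key bound of Theorem \ref{main} have already been transferred to the Kac and soft-spheres models; the only step requiring mild care is writing the inversion formula with densities (rather than kernels) so that the tensor-product notation in the expansion $F^{\otimes |K|} \otimes E_{j-|K|}$ is read as pointwise products in the remaining variables, after which the estimates are literally those of the quantum case.
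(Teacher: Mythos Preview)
Your proposal is correct and matches the paper's approach exactly: the paper does not give a separate proof of Corollary~\ref{cormaink} but simply notes (via Remark~\ref{crucial} and the estimates \eqref{endow}) that the proof of Corollary~\ref{cormain} transfers verbatim to the Kac and soft-spheres settings, which is precisely what you outline. One minor indexing slip: your telescoped sum should start at $k=0$ (the term $E_j-E^n_j$ with no $F$ factors), not $k=1$; after reindexing this matches the paper's $\sum_{k=1}^j\binom{j}{k}\|E_k-E^n_k\|_1$ since the $k=0$ term $E_0-E^n_0$ vanishes.
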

\begin{appendix}
\section{The asbtract model}\label{abstract}
\subsection{The model}\label{model}
We will show in this section that the main results of \cite{PPS} and of Section   \ref{into} of the present paper remain true in the ``abstract`" mean field formalism for a dynamics of $N$ particles that we will describe now. The present formalism contains the   abstract formalism developed in \cite{PPS}, without requiring a space of states endowed with a multiplicative structure.
\subsubsection*{States of the particle system and evolution equations}\label{state}

Let $\lon$ be a vector space on the complex numbers. 
We  suppose the family of (algebraic) tensor products $\{\lone{n}, n=1,\dots,N\}$ equipped with a family of norms $\norm{\cdot}^n$  satisfying    assumption  $\ref{a}$ below. the $N$-body dynamics will be driven on $\lone{N}$ by a one- and two- body  interaction  satisfying assumption $\ref{b}$
and the mean field limit equation will be supposed to satisfy assumption $\ref{c}$. 

Assumptions $\ref{a}-\ref{c}$ below will be   followed by their incarnations in the {\bf K}(ac), {\bf S}(oft spheres) and {\bf Q}(uantum) models.

By convention we denote $\lone{0}:=\bC,\ \norm{z}^0=|z|$ and we  denote by $\lon^{\hat{\otimes}n}$ the  completion of $\lone{n}$ with respect to the norm $\norm{\cdot}^n$.

\textit{\ksq $\lone{n}$ is $L^1(\bR^d,dv), L^1(\bR^{2d},dxdv)$ and $\cL^1(L^2(\bR^d)$, the space of trace class operators on $L^2(\bR^d)$, with their  associated norms.}

\vskip 1cm

\begin{enumerate}[label=(\bf\Alph*)]
\item\label{a}   There exists a family of subsets $\lon^{\hat\otimes n}_+$ of $\lon^{\hat\otimes n}, n=\unN$, of positive elements $F$ denoted by 
 $F >0$  stable by addition, multiplication by positive reals and tensor product and  there exists a linear function  $\tr:\ \lon\to\bC$, called trace.
 For every $1\leq k,n\leq N$ and $1\leq i\leq j\leq n\leq N$, let $\tr^k_n$ and $\sigma^n_{i,j}$ be  the  two mapping defined  by\footnote{The fact that the second and fourth lines of \eqref{tr}  define a mapping on the whole tensor space $\lone{n}$ results easily from the definition of tensors products through the so-called universal property  \cite{lang}. Indeed, let $\varphi_n$ be the natural embedding $\lon^{\times n}\to\lone{n}, (v_1,\dots,v_n)\mapsto v_1\otimes\dots\otimes v_n$, and let $h$ be any mapping  
$\lon^{\times n}\to \lon^{\times n'}$, then the universal property of tensor products says that  there is a unique map $\tilde h: \lone{n}\to\lone{n'}$ such that $\tilde h\circ \varphi_n=\varphi_{n'}\circ h$. 
Taking $n'=n-1$, $h(v_1,\dots,v_k,\dots,v_n)=(\Tr(v_k)v_1,\dots,v_{k-1},v_{k+1},\dots,v_n)$ for $\tr^k_n$, and 
$n'=n$, 
$h(v_1,\dots,v_i,\dots,v_j,\dots,v_n)=
(v_1,\dots,v_j,\dots,v_i,\dots,v_n)$ for $\sigma^n_{i,j}$ 
give the desired 
extensions.}
\newpage
\be\label{tr}
\left\{
\begin{array}{lccl}
\marg{k}_n: &\lone{n}&\to&\lone{n-1}\\
&\underset{\i=1}{\overset{n}{\otimes}}v_i&\mapsto&
\tr(v_k)\underset{\substack{\i=1\\i\neq k}}{\overset{n}{\otimes}}v_i,\\
%
\sigma^n_{i,j}: &\lone{n}&\to&\lone{n}\\
&\underset{\i=1}{\overset{n}{\otimes}}v_i&\mapsto&
\underset{\i=1}{\overset{n}{\otimes}}v'_i,\ \ \ 
v'_k=v_k,i\neq k\neq j\ ;\ v'_i=v_j,v'_j=v_i
.
\end{array}
\right.
\ee
We will suppose that $\tr^k_N$ and $\sigma^n_{i,j}$, $i,j,k\leq n\leq N$, 
satisfy, for any $F\in\lone{n}$,
\be\label{posnorm}
\left\{
\begin{array}{l}
\tr^k_N(F), \sigma^n_{i,j}(F)>0,\ \norm{\tr^k_n( F)}_{n-1}=\norm{F}_n\mbox{ when }F>0\\
\norm{\sigma^n_{i,j}(F)}^n=\norm{F}^n\\
\norm{\tr^k_n( F)}^{n-1}\leq\norm{F}^n
\end{array}
\right.
\ee
In particular one has that $\norm{F}^n=\tr^n\dots\tr^1F$ when $F>0$ and $|\tr^n\dots\tr^1F|\leq\norm{F}^n$ in general.

\noindent Note that \eqref{posnorm} allows to extend  $\tr^k_n$ and $\sigma^n_{i,j}$ to $\lon^{\hat\otimes n}$ by continuity. We will use  the same notation for these extensions.

{\it \ksq $\tr^k$ is $\int_{\bR^d} \cdot dv_k,\ \int_{\bR^{2d}} \cdot dx_kdv_k$ as indicated in Section \ref{kac}, and the partial traces defined in Section \ref{quantum}.
The action of $\sigma^n_{i,j}$ consists obviously in exchanging the variables $v_i$ and $v_j$, $(x_i,v_i)$ and $(x_j,v_j)$ and $(x_i,x'_i)$ and $(x_j,x'_j)$, (in the integral kernel), 
respectively. Finally \eqref{posnorm} is satisfied in the three cases.}
\end{enumerate}

 From now on and when no confusion is possible, we will identify $\lone{n}$ with its completion $\lon^{\hat{\otimes}}$ and  we will denote $\tr^k_N=\tr^k$ (note also that $\tr=\tr^1_1=\tr^1$), $\sigma^N_{i,j}=\sigma_{i,j}$ and $\tr(=\tr_n)=\tr^n_n\tr^{n-1}_n\dots\tr^1_n$.
Moreover, with a slight abuse of notation,  we will denote
\be\label{normsa}
\left\{
\begin{array}{l}
\norm{\cdot}_1= \norm{\cdot}^n,\ \forall n=\unN\\
\norm{\cdot} \  \mbox{ the operator norm on any }\cL(\lone{i},\lone{j}),\ \forall i,j=\unN
\end{array}\right.
\ee
(here $\cL(\lone{i},\lone{j})$ is the set of bounded operators form $\lone{i}$ to $\lone{j}$). 

 We call \textit{symmetric} any element  of $\lone{n}$ invariant by the action of $\sigma^n_{i,j},\ i,j\leq~ n$.

We call {\em state of the $N-$particle system}  an element  of
\be
\label{cone}
\cD_N= \{ F \in \lone{n} \,\,|\,\, F >0, \quad \| F\|=1 \quad \text {and $F$ is symmetric}\}.
\ee
%


For $j=0,\dots,N$, the $j$-particle marginal of $F^N \in (\lone{N})^+_1$ is defined as the 
 the partial trace of order $N-j$ of $F^N $, that is
\be
\label
{jmarg}
F^N_j =\tr^N \tr^{N-1} \cdots \tr^{j+1} F^N, F_N^N:=F^N.
\ee
Note that $F^N_j\in \lone{j}$ ($F^N_0=1\in\lone{0}:=\bC$) and $F^N_j >0, \norm{F^N_j}^j=\norm{F^N}^N$ since $\tr$ is positivity and norm  preserving, and 
obviously $\FF^N_j$ is symmetric as $\FF^N$. That is to say: $$F^N_j\in\cD_j.$$

\vskip 1cm
\begin{enumerate}[label=(\bf\Alph*),resume]
\item\label{b} The evolution of a state $\FF^N$ in $\lone{N}$ is supposed to be given by the $N-$particle dynamics associated to a two-body interaction:
\be\label{Neq}
\frac{d}{dt}\FF^N=(K^N+V^N)\FF^N,
\ee
where the operators on the right hand side are constructed as follows.
 \be\label{defV}
K^N=\sum_{i=1}^N\mathbb I_\lon^{\otimes(i-1)}\otimes K\otimes\mathbb I_\lon^{\otimes(N-i)}\;
\ee
and 
\be\label{defV}
V^N=\frac1N\sum_{1\leq i< j\leq N}V_{i,j},\ \ \ V_{i,j}:=\sigma^N_{1,i}\sigma^N_{2,j}V
\otimes  \mathbb I_{\lone{(N-2)}}\sigma^N_{1,i}\sigma^N_{2,j}
\ee
for a (possibly unbounded) operator $K$ acting on $\lon$ and a bounded two-body (potential) operator $V$ acting on $\lone{2}$.

 We assume furthermore that $K$ is the generator of a strongly continuous, isometric, positivity preserving semigroup (in $\lon $)
\be
\label{K+}
e^{Kt} F >0 \quad \text {if} \quad F>0\;; \qquad  \|e^{tK}\|=1\;.
\ee
and  $K^N +V^N $ is the generator of a strongly continuous, isometric, positivity preserving semigroup (in $\lone{N}$)
\be
\label{KV+}
e^{(K^N +V^N) t} F^N >0 \quad \text {if} \quad F^N>0\;; \qquad  \| e^{t(K^N +V^N) } \|=1\;.
\ee
  Finally, for any $F\in\lon$, $F^N \in \lone{N}$ and $i,r >j$,  we assume
\be\label{trackf}
\tr(KF)=0\mbox{ and }
\tr^{j,N} (V_{i,r} F^N)=0\;.
\ee
This last property is necessary to deduce the forthcoming hierarchy.

{\it \ksq the ingredients in \eqref{Neq} are given 
in Sections  \ref{kac} and \ref{quantum}, where
\eqref{K+}-\eqref{trackf} are 
shown to be satisfied.}
\end{enumerate}

\medskip
Note the symmetry property of the equation \eqref{Neq} induced by the definition of $V^N$: if the initial condition $\FF^N_0$ for \eqref{Neq} is symmetric, then $\FF^N (t) $ is still symmetric.

\subsubsection*{Hierarchies}


The family of $j$-marginals, $j=\unN$, are solutions of the BBGKY hierarchy of equations
\be\label{eqhiera}
\pa_t \FF^N_j = \left(K^j+\frac {T_j}{N}\right) \FF^N_j+ \frac {(N-j)} N C_{j+1} \FF^N_{j+1}
\ee
where:
\be
K^j=\sum_{i=1}^j\mathbb I_\lon^{\otimes(i-1)}\otimes K\otimes\mathbb I_\lon^{\otimes(j-i)},
\ee
\be
T_j=\sum_{1\leq i < r \leq j} T_{i,r}\ \ \ \ \ \ \mbox{ with }\  T_{i,r}=V_{ir}
\ee
and
\be
\label{C}
C_{j+1}\FF^N_{j+1}=\marg{j+1}\left(\sum_{i\leq j}V_{i,j+1}\FF^N_{j+1}\right)=\sum_{i=1} ^jC_{i,j+1}\FF^N_{j+1},
\ee
\be
C_{i,j+1}: \lon^{\otimes (j+1)}\to\lon^{\otimes j},\ \ \ C_{i,j+1}\FF^N_{j+1}=\marg{j+1}\left(V_{i,j+1}\FF^N_{j+1}\right)\;,
\label{eq:Ci}
\ee
Indeed, thanks to \eqref{trackf} we get easily by applying $\tr^{j,N}$ on \eqref{Neq} that
$$
\frac d{dt}F^N_j=(K^j+\frac{T_j}N)F^N_j
+\frac1N\tr^{j,N}(\sumls{1\leq i\leq j< k\leq N} V_{i,k}F^N)
$$
By symmetry of $F^N$ and $V_{i,k}$ we get $\tr^{j,N}(V_{i,k}F^N)=\tr^{j+1}(V_{i,j=1}F^N_{j+1})$ for all $k>j$ and  \eqref{eqhiera} follows.

Note that, thanks to  the assumption \eqref{posnorm} and for all $ i \leq j=\unN$,
\be\label{normtc}
\norm{T_i}\leq j^2\norm{V},
\mbox{ and }\norm{C_{i,j+1}}\leq j\norm{V}
\ee
(meant for ($\norm{T_i}_{\lon^{\otimes i}\to\lon^{\otimes i}},\ \norm{C_{i,j+1}}_{\lon^{\otimes (j+1)}\to\lon^{\otimes j}},\ \norm{V}_{\lon^{\otimes 2}\to\lon^{\otimes 2}}$ using \eqref{normsa}).

\medskip
\vskip 1cm
{ We introduce the non-linear mapping $Q(\FF,\FF)$, $Q: \lon \times \lon \to \lon$ by the formula
\be\label{nlc}
Q(\FF,\FF)=\marg{2}(V_{1,2}(\FF\otimes\FF))
\ee
and the nonlinear mean field equation on $\lon$
\be\label{mfea}
\pa_t\FF=K\FF+Q(\FF,\FF),\ F(0)\geq 0,\ \norm{F(0)}_1=1.
\ee
Eq.\,\eqref{mfea} is the Boltzmann, Povzner or Hartree equation according to the specifications
established in the table above. In full generality we will assume
\begin{enumerate}[label=(\bf\Alph*), resume]
\vskip 0.3cm\item\label{c}
\eqref{mfea} has  for all time a unique solution $F(t)>0$ and $\norm{F(t)}
=1$. 

{\it \ksq \ref{c} is true by standard perturbations methods.}
\end{enumerate}
\vskip 1cm

\subsubsection*{Correlation error.}
To introduce the correlation errors, we need to extend slightly the above structure.

For any  subset $J\subset\{\unN\}$ we first define
\be\label{lj}
\lone{J}_N:=\overset{N}{\underset{i=1}{\otimes}}\lone{\chi_J(i)},
\ee
where $\chi_J$ is the characteristic function of $J$ and $\lone{0}=\bC$. 

Then we introduce $\lone{J}$, the subspace of $\lone{J}_N$ formed by vectors of
the form 
$
\overset{N}{\underset{i=1}{\otimes}}v_i
$ where $v_i=1\in\bC$ for $i\notin J$ and $v_i\in\lon$ for $i\in J$. Note that $\lone{J}$ is sent to $\lone{|J|}$ by the mapping $$\Pi: \ 
\overset{N}{\underset{i=1}{\otimes}}v_i
\in\lone{J}\mapsto 
\underset{i\in J}{\otimes}v_i\in\lone{|J|}.$$
We define a norm on $\lone{J}$ by
\be\nn
\norm{\cdot}_{\lone{J}}=\norm{\Pi(\cdot)}_
1.
\ee


For $\FF\in\lon$ and $K\subset J\subset \{\unN\}$ we introduce the linear operator
 $\margetracej{{K}}{\FF}{J}$, defined through its action on  factorized elements as
\bea
\margetracej{{K}}{\FF}{J}:\lone{J/K}&\to&\lone{J}\nonumber\\
\label{mgtr} 
\overset{N}{\underset{i=1}{\otimes}} v_i
&\mapsto&
\overset{N}{\underset{i=1}{\otimes}} a_i,
\eea
where $\left\{
\begin{array}{ccccl}
a_s&=&1\in\bC&\mbox{ if }& s\notin J\\
a_s&=&\FF&\mbox{ if }& s\in K\\
a_{s}&=&v_s&\mbox{ if }& s\in J/K
\end{array}
\right.$.

\vskip 0.5cm
Note that, for $K, K'\subset J,\ K\cap K'=\emptyset$, we have the  composition
\be\label{KK'}
\margetracej{K}{\FF}{J}\margetracej{K'}{\FF}{J/K}=\margetracej{(K\cup K')}{\FF}{J}=\margetracej{K'}{\FF}{J}\margetracej{K}{\FF}{J/K'}
\ee
and more generally, for all $\FF,G$,
\be\label{inter}
\margetracej{K}{\FF}{J}\margetracej{K'}{G}{J/K}=
\margetracej{K'}{G}{J}\margetracej{K}{\FF}{J/K'}.
\ee

For any 
subset $J\subset\{\unN\}$,
 we define the  
{\em correlation error} by
\be\label{deferror}
E_J=\sum_{K \subset J} (-1)^{|K|} 
\margetracej{K}{\FF}{J}
\FF^N_{
J
/K
}
\ee
where 
$\FF$ solves \eqref{mfea}, the operator $\margetracej{K}{\FF}{J}$ is defined by \eqref{mgtr} and
$\FF^N_L\in \lone {L}$ is defined through its decomposition on factorized states. Namely if
$$
\FF^N =\sum_{\ell_1,\dots,\ell_N}
c_{\ell_1,\dots,\ell_N}
v_{\ell_{1}}\otimes\dots\otimes v_{\ell_{N}},
$$
then
$$
\FF^N_L =\sum_{\ell_1,\dots,\ell_N}
c_{\ell_1,\dots,\ell_N}
a_{\ell_{1}}\otimes\dots\otimes a_{\ell_{N}},
$$
where $\left\{
\begin{array}{cclcl}
a_s&=&\tr (v_s)\in\bC&\mbox{ if }& s\notin L\\
a_s&=&v_s&\mbox{ if }& s\in L
\end{array}
\right.$.

\noindent
The link between the definition of $\FF^N_L$ and the definition of the marginals $\FF^N_j$ given in \eqref{jmarg} is the following:
\be\label{jmargj}
\FF^N_{\{1,\dots,\ell\}}=\FF^N_\ell\otimes(1)^{\otimes(N-\ell)}\in\lone{ \ell}\otimes(\lone{0})^{\otimes (N-\ell)}.
\ee

The  formula inverse to \eqref{deferror} reads
\be\label{invdeferror}
\FF^N_J=\sum_{K \subset J} 
\margetracej{K}{\FF}{J}
E_{J/K
}.
\ee
Note that the contribution in the right hand side of \eqref {invdeferror} 
corresponding to $K=J$ and $K=\emptyset$
are $F^{\otimes |J|} $ and $E_J$  respectively.
To prove \eqref{invdeferror}, we plug \eqref{deferror} in the r.h.s. of \eqref{invdeferror} and we use \eqref{KK'}:
\bea
\sum_{K \subset J} \margetracej{K}{\FF}{J} E_{J/K}&=&
\sum_{K \subset J} \margetracej{K}{\FF}{J}
\big[
\sum_{K' \subset J/K} (-1)^{|K'|} 
\margetracej{K'}{\FF}{J/K}\FF^N_{(J/K)/ K'}\big]\nonumber\\
&=&\sum_{K\cup K' \subset J} 
\sumls{K \subset J\\K'\cap K=\emptyset}
(-1)^{|K'|} 
\margetracej{K}{\FF}{J} 
 \margetracej{K'}{\FF}{J/K}
 \FF^N_{J/(K\cup K')}
\nonumber\\
&=&
\sum_{L \subset J}(\sum_{K' \subset L}(-1)^{|K'|})\margetracej{L}{\FF}{J}\FF^N_{J/L}
=\FF^N_J\nonumber
\eea
since $\sum\limits_{K' \subset L}(-1)^{|K'|}=\sum\limits_{k'=0}^{|L|}\binom{|L|}{k'}(-1)^{|K'}=0^{|L|}=0$ if $L\neq\emptyset$, and $=1$ if $L=\emptyset$ (since  $\sum\limits_{K' \subset \emptyset}(-1)^{|K'|}=(-1)^0=1$).



One notices that since $\FF^N_j$ is the marginal of some $\FF^N$ which decomposes on elements of the form 
$v_1\otimes\cdots \otimes v_N$, $\FF^N_j$ decomposes on elements of the form $(\prod\limits_{k=j+1}^N\tr{v_k})v_1\otimes\dots \otimes v_j$. Since one knows that $\FF^N_j$ is symmetric, 
it is enough to choose one bijection $i_J:\{1,\dots,j\}\to J,\ |J|=j$, and consider the mapping  
\bea
\Phi_{i_J}:
\lone{|J|}&\overset{\Phi_{i_J}}\to&\lone{J}\nn\\
{\underset{j\in J}{\otimes}}v_j\in\lono{|J|}
&\mapsto&
\overset{N}{\underset{i=1}{\otimes}}a_i\in\lono{J} 
\label{ntoJloc}\\
 \FF^N_{|J|}&\mapsto&\FF^N_{J}\label{ntoJ}
 \eea
where  $a_s=1$ if $i\notin J$ and $a_{i_J(j)}=v_j$.

\noindent $\Phi_{i_J}$ is obviously one-to-one since $i_J$ is so, and, though \eqref{ntoJloc} depends on the embedding chosen, \eqref{ntoJ} does not: $\Phi_{i_J}$ restricted to the space $\lone{|J|}_S$ of symmetric-by-permutation elements of $\lone{|J|}$, depends only on $J$ and not on $i_J$. We will call $\Phi_J$ this restriction,
\be\label{phionls}
\Phi_J=\Phi_{i_J}|_{\lone{|J|}_S}.
\ee
 
 The same argument is also valid for $E_J$ which enjoys the same symmetry property than $F^N_J$ and we define
%
\be\label{defejjj}
E_{|J|}=\Phi_J^{-1}E_J.
\ee
  $\Phi_J$ is obviously isometric and we have that 
  \be\label{normej}
  \norm{E_J}_{\lone{J}}=\norm{E_{\{1,\dots,|J|\}}}_{\lone{\{1,\dots,|J|\}}}=\norm{E_{|J|}}_
  1.
  \ee
Therefore, considering the one-to-one correspondence $\Phi_J$, it is enough to compute/estimate the quantities $E_j, j=1,\dots,N$. $E_j$ and $\FF^N_j$ are linked by
\be\label{defejj}
\left\{
\begin{array}{ccl}
E_j&=&\suml_{K \subset J} (-1)^{|K|} [\FF]^{\otimes K}_J\Phi_{J/K}\FF^N_{j-|K|}\\
&&\\
\FF^N_j&=&\suml_{K \subset J} [\FF]^{\otimes K}_J\Phi_{J/K}E_{j-|K|}\;.
\end{array}\right.
\ee
\begin{enumerate}[label=]
\item {\it \ksq the corrsponding expression are given in Sections  \ref{kac} and \ref{quantum}.}
\end{enumerate}
\subsection{Main results similar to \cite{PPS}}\label{resultsPPS}
 
 The kinetic errors $ E_j,\ j=\unN,$
 satisfy the system of equations
\bea\label{eqhieraerror1}
\pa_t E_j&=& \left(K^j+\frac 1{N}T_j\right) E_j +
D_j
E_{{j}} \nn \\
&+&  
D_j^1
E_{j+1} + 
D_j^{-1}
E_{j-1} +
D_j^{-2}
E_{j-2},
\eea
where the  operators $D_j,D_j^1,D_j^{-1},D_j^{-2}, j=\unN$, are defined in Appendix \ref{proofmaineq} below, equations \eqref{newds}-\eqref{E0}, together with the proof of \eqref{eqhieraerror1}. Moreover, since \eqref{endow} holds true, we know by Remark 3.2 in \cite{PPS}, that the proof of Theorem 2.1 (and therefore Corollary 2.2) in \cite{PPS} remain valid in our present setting. 

We get the following result.
\begin{Prop}
The statements of Theorem \ref{mainpps} hold true in the abstract setting defined in Section \ref{model}.
\end{Prop}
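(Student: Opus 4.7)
The plan is to show that every ingredient used in the proof of Theorem \ref{mainpps} in \cite{PPS} is already available in the abstract setting of Section \ref{model}, so that the proof can be transported verbatim, essentially through the observation recorded in Remark \ref{crucial}. First, I would record that the hierarchy \eqref{eqhieraerror1} for the abstract errors $E_j$ is established in Appendix \ref{proofmaineq} with $D_j,D_j^1,D_j^{-1},D_j^{-2}$ given by the analogs of \eqref{newds}-\eqref{E0}. This relies only on the inversion formula \eqref{invdeferror}, the BBGKY system \eqref{eqhiera} (itself a consequence of \eqref{trackf} and the symmetry of $\FF^N$), and the mean-field equation \eqref{mfea}, all of which are built into assumptions \ref{a}-\ref{c}.

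Next I would verify that, in the abstract setting, the only quantitative inputs needed in the argument of \cite{PPS} are in force: the norm bounds \eqref{endow} on the operators $D_j,D_j^{1},D_j^{-1},D_j^{-2}$, and the isometry and positivity preservation of the semigroups $e^{tK^j}$ and $e^{t(K^j+T_j/N)}$. The first follow from the boundedness of $V$ and the estimates \eqref{normtc}, using that $\tr^k_n$ and $\sigma^n_{i,j}$ are non-expanding for the chosen norms (assumption \ref{a}, \eqref{posnorm}). The second is precisely \eqref{K+}-\eqref{KV+} of assumption \ref{b}, applied to the $j$-body subsystem (which has the same algebraic structure as the $N$-body one by construction of $K^j$, $T_j$ via \eqref{defV}).

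With these two inputs in hand, I would invoke Remark 3.2 of \cite{PPS}, which asserts that the proof of Theorem 2.1 there only uses the $D_j^{\pm}$ through their operator norms with bounds of the form $\norm{D_j},\norm{D_j^1}\le j$ and $\norm{D_j^{-1}},\norm{D_j^{-2}}\le j^2/N$, together with the isometry of the underlying semigroups. Consequently the Gronwall-type induction on $j$ in \cite{PPS} goes through without modification and produces the same estimates \eqref{timet} for the abstract $E_j(t)$.

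Finally, for the assertion corresponding to Corollary 2.3 of \cite{PPS}, namely the bound \eqref{eqcormain}, I would combine the inversion identity \eqref{defejj} with the just-obtained estimates on $\norm{E_j(t)}_1$, exactly as in the computation carried out in the proof of Corollary \ref{cormain} above; the combinatorial sum $\sum_{k}\binom{j}{k}$ against $(j/\sqrt N)^k$ is bounded by the same $O(j^2/N)$ expression. The main (and really only) potential obstacle is bookkeeping: one must make sure that the abstract definitions of $D_j,D_j^{1},D_j^{-1},D_j^{-2}$ actually reproduce the norm bounds \eqref{endow} with the correct powers of $j$ and $1/N$; all other steps are purely algebraic or follow from the isometric semigroup property via Gronwall.
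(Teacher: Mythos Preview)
Your proposal is correct and follows essentially the same route as the paper: establish the abstract error hierarchy \eqref{eqhieraerror1} via Appendix \ref{proofmaineq}, verify the operator-norm bounds \eqref{endow} and the isometry properties \eqref{K+}--\eqref{KV+}, and then invoke Remark 3.2 of \cite{PPS} to transport the proof of Theorem 2.1 (and hence Corollary 2.2) verbatim. The paper's own argument is exactly this, stated more tersely in the paragraph immediately preceding the Proposition.
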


\vskip 1.5cm
\subsection{Asymptotic expansion
 }\label{mainresultabs}\ 
It is easy to see that the proofs of the main results  expressed in Section \ref{mainresult} are adaptable in an elementary way to the present abstract paradigm. Indeed they use only the three properties stated in Remark \ref{crucial}, valid in the present setting as pointed out at the very end of Appendix \ref{proofmaineq}, formula \eqref{endow}, together with \eqref{K+}-\eqref{KV+}.

Therefore,  the statements contained in Theorem \ref{main} and Corollary  \ref{corcormain} hold true, verbatim, under the hypothesis of Theorem \ref{mainpps}, and with the definition of corrections errors given by the first line of \eqref{defejj} and replacing $\tfrac{{\norm{V}_{L^\infty}}}\hbar$ by $\norm{V}$ in \eqref{csas}. 

Moreover defining now $F^{N,n}_j$ by truncating the second line of \eqref{defejj} at order $n$, that is
$$
\FF^{N,n}_j=\suml_{K \subset J} [\FF]^{\otimes K}_J\Phi_{J/K}E^n_{j-|K|}
$$
where $E^n_j$ is defined by \eqref{Enj},
 Corollary \ref{cormain} reads as follows.
 
 \begin{Cor}\label{cormaina}[abstract]
Let $\FF^N(t)$ the solution of the $N$ body system \eqref{Neq}   with initial datum $\FF^N(0)=\FF^{\otimes N}$, $0<\FF\in \lon, \norm{F}_1=1$, and $\FF(t)$ the solution of the mean-field  equation \eqref{mfea}  with initial datum $\FF$.

Then,  for all $n\geq0$ and $N\geq 4(eA^{2n}_tj)^2$,
\bea
&&
\norm{\FF^N_j(t)-
F^{N,n}_j(t)
}_1
\leq
N^{-n-\frac12}\ 
\tfrac{2tC_{2n}(t)eA^{2n}_tj}{\sqrt N}.\nn
\eea
\end{Cor}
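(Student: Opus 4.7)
The plan is to run verbatim the argument already given for Corollary \ref{cormain} in the quantum case, with the inverse correlation formula \eqref{invdeferrorq} replaced by the abstract analog (second line of \eqref{defejj}) and Theorem \ref{main} replaced by its abstract counterpart (which, as noted right before the statement, holds in the present setting because only \eqref{endow}, \eqref{K+} and \eqref{KV+} are used). First I would subtract the two expansions
\be\nn
\FF^N_j(t)-\FF^{N,n}_j(t) = \sum_{K\subset J}[\FF(t)]^{\otimes K}_J\,\Phi_{J/K}\bigl(E_{j-|K|}(t)-E^n_{j-|K|}(t)\bigr),
\ee
so the problem reduces to controlling $\|E_k-E^n_k\|_1$ uniformly in $k=0,\dots,j$ and summing.

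Next I would check that, on the symmetric subspace, $[\FF]^{\otimes K}_J\,\Phi_{J/K}$ is norm-$1$: this follows from the isometry \eqref{normej} of $\Phi_J$, the composition rule \eqref{inter}, the positivity-preserving, norm-preserving properties of the partial traces in \eqref{posnorm}, and $\tr\FF=\|\FF\|_1=1$. Consequently
\be\nn
\norm{\FF^N_j(t)-\FF^{N,n}_j(t)}_1 \;\leq\; \sum_{k=0}^{j}\binom{j}{k}\norm{E_k(t)-E^n_k(t)}_1,
\ee
which is exactly the scalar inequality that opens the proof of Corollary \ref{cormain}. From here I would plug in the abstract Theorem \ref{main}, written with the renamed constants $C_{2n}(t)=D_{2n}(t)$ and $A^{2n}_t=D'_{2n}(t)$, to get $\norm{E_k-E^n_k}_1\leq N^{-n-\frac12}C_{2n-k}(t)(A^{2n-k}_tk^2/N)^{k/2}$, and then repeat the purely combinatorial manipulations: $\binom{j}{k}\leq j(j-1)\cdots(j-k+1)/k!$, $k^k/k!\leq e^k/\sqrt{2\pi k}$, and recognition of a geometric series in $jeA^{2n}_t/\sqrt N$. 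The smallness assumption $N\geq 4(eA^{2n}_tj)^2$ ensures that this ratio is at most $\tfrac12$, so the series is dominated by twice its first term, producing the claimed bound.

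The main obstacle, and really the only one requiring care specific to the abstract setting, is verifying the norm-$1$ claim for $[\FF]^{\otimes K}_J\,\Phi_{J/K}$ acting on the symmetric subspace; the rest is a direct transcription of the quantum proof, with arithmetic manipulations that do not see the underlying space. Once this isometry point is in place, nothing else in the chain of estimates refers to the specific structure of $\lon$ or to the Hamiltonian, so the same inequalities go through line by line and yield the stated bound.
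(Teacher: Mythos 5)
Your approach is exactly the paper's: immediately before the statement the paper declares that the quantum proof of Corollary \ref{cormain} transcribes verbatim because only the properties of Remark \ref{crucial} (i.e.\ \eqref{endow}, \eqref{K+}--\eqref{KV+}) are used, and your proposal simply carries out that transcription, subtracting the two lines of \eqref{defejj}, reducing to $\|E_k-E^n_k\|_1$, applying the abstract Theorem \ref{main}, and repeating the binomial/geometric-series estimates with the cutoff $N\geq 4(eA^{2n}_tj)^2$. One caveat is worth recording: you assert that the norm-$1$ property of $[\FF]^{\otimes K}_J\Phi_{J/K}$ follows from \eqref{posnorm}, \eqref{inter} and \eqref{normej}; but \eqref{posnorm} only gives $\|\tr^k_n H\|\leq\|H\|$, which applied to $H=F^{\otimes|K|}\otimes G$ yields the \emph{reverse} inequality $\|G\|\leq\|F^{\otimes|K|}\otimes G\|$. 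The needed sub-cross-norm bound $\|F\otimes G\|\leq\|F\|_1\,\|G\|_1$ is a separate, tacit hypothesis of the abstract paradigm (already used throughout the derivation of the abstract version of Theorem \ref{mainpps}), automatically satisfied in the concrete {\bf K}, {\bf S}, {\bf Q} models since the trace and $L^1$ norms are cross-norms. Once that is granted, the rest of your argument is a sound line-by-line reproduction of the paper's proof.
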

\section{Derivation of the correlation hierarchy \eqref{eqhieraerror1}}\label{proofmaineq}

From the definition of $E_j$ (cf.\,\eqref{deferror}) we find
\bea
\pa_t E_J=
&\suml_{K \subset J} (-1)^{|K|} 
\left(\pa_t(\margetracej{K}{\FF}{J})\FF^N_{J/K}+
\margetracej{K}{\FF}{J}\pa_t\FF^N_{J/K}\right)\nonumber
\eea
Moreover, by \eqref{mgtr}
\be\label{kzero}
\pa_t\left(\margetracej{K}{\FF}{J}\right)=\sum_{k_0\in K}\margetracej{K/\{k_0\}}{\FF}{J}
\margetracej{\{k_0\}}{\pa_t\FF}{J/(K/\{k_0\})}.
\ee
Applying $\Phi_J$ defined in \eqref{defejjj} to the BBGKY hierarchy \eqref{eqhiera}, one finds easily that $F^N_J$ satisfies, denoting $\a(j,N):=\frac{N-j}N$,
\be\label{absbbgkyhier}
\pa_t \FF^N_J = K^J\FF^N_{J}+\frac1N\suml_{i<r\in J}T_{i,r}\FF^N_J+ \a(j,N) \suml_{i\in J}C_{i,j+1} \FF^N_{J\cup\{j+1\}}
\ee
(for $j+1\notin J$).

By the mean-field equation \eqref{mfea} 
we deduce that 
\bea
\pa_t E_J
=& \suml_{K \subset J} (-1)^{|K|} \suml_{k_0\in K} 
\margetracej{K/\{k_0\}}{\FF}{J}
 \margtracej{k_0}{(K\FF+Q(\FF,\FF))}{J/(K/\{k_0\}}\FF^N_{J/K} \nn \\
+&   \suml_{K \subset J} (-1)^{|K|}\a(j-|K|,N) \suml_{i\in J/K } \margetracej{K}{\FF}{J} C_{i,j+1} \FF^N_{(J/K)\cup \{j+1\}  }  \nn\\
+& \frac 1{2N} \suml_{K \subset J} (-1)^{|K|} \margetracej{K}{\FF}{J} (\suml_{ i\neq r \in J/K } T_{i,r})\FF^N_{J/K }\nn\\
+& \suml_{K \subset J} (-1)^{|K|} \margetracej{K}{\FF}{J}( K^{J/K }\FF^N_{J/K })\;.
\label{eq1} 
\eea
We denote by ${\cal T} _i$, $i=1,2,3,4$, the four terms contained  in the four lines of the r.h.s.\,of \eqref{eq1},
respectively.
The computation of the ${\cal T} _i$s is purely algebraic and will use only 
the four following properties
$$
\left\{
\begin{array}{l}
 \suml_{K\subset L}(-1)^{|K|}=\delta_{|L|,\emptyset}\\ 
\suml_{K\subset L}|K|(-1)^{|K|}=-\delta_{|L|,1}\\
\margetracej{K}{\FF}{J}\margetracej{K'}{\FF}{J/K}
=\margetracej{K'}{\FF}{J}\margetracej{K'}{\FF}{J/K'}
=\margetracej{(K\cup K')}{\FF}{J},\ \ \ K,K'\subset J,\ K\cap K'=\emptyset\\
C_{i,j+1}\margetracej{K}{\FF}{(J/K)\cup \{j+1\}}=
\margetracej{K}{\FF}{(J/K)}C_{i,j+1},\ K\subset J,\ j+1\notin J.
\end{array}
\right.
$$
In order not to make the paper too heavy, we will compute extensively two terms and refer to an earlier version, \cite{psextenso}, for an exhaustive calculus.
\vskip 0.5cm
Using  the definition \eqref{deferror}, we get 
\bea
&\cal T_1:=&\suml_{K \subset J} (-1)^{|K|} \suml_{k_0\in K}  
\margetracej{K/\{k_0\}}{\FF}{J}\margtracej{k_0}{(K\FF+Q(\FF,\FF))}{J/(K/\{k_0\})}\FF^N_{J/K} \nn \\
&=&
-\suml_{k_0\in J}\margtracej{k_0}{(K\FF+Q(\FF,\FF))}{J}
\suml_{K \subset J/\{k_0\}}(-1)^{|K|}
\margetracej{K}{\FF}{J/\{k_0\}}\FF^N_{(J/\{k_0\})/K} \nn\\
&=&
-\suml_{i\in J}\margtracej{i}{(K\FF+Q(\FF,\FF))}{J}E_{J/\{i\}}\;.\label{t1}
\eea


To compute ${\cal T} _2$ we make use of the inverse definition \eqref{invdeferror}:
\bea\label{calt2}
{\cal T} _2&:=&
\suml_{K \subset J}\a(j-|K|,N) (-1)^{|K|} \suml_{i\in J/K } \margetracej{K}{\FF}{J} C_{i,j+1} \FF^N_{(J/K)\cup \{j+1\}  } \nn\\
&=&
 \suml_{K \subset J}\a(j-|K|,N) (-1)^{|K|} \suml_{i\in J/K } \margetracej{K}{\FF}{J}\dots \nn\\
&&
\dots C_{i,j+1}\sum_{K'\subset (J/K)\cup \{j+1\}}
\margetracej{K'}{\FF}{(J/K)\cup \{j+1\}}E_{((J/K)\cup \{j+1\})/K'}\;.
\eea
Distinguishing among the belonging or not to $K'$ of $i$ and $j+1$ in the r.h.s. of \eqref{calt2},   we decompose 
\be\label{deft2}
\mathcal T_2=
\mathcal T_2^{i, j+1 \in K'}
+
\mathcal T_2^{i, j+1 \notin K'}
+
\mathcal T_2^{i \in K' , j+1 \notin  K'}
+
\mathcal T_2^{i \notin K' , j+1 \in K'}
\ee 
We have
\bea
{\cal T}^{i, j+1 \in K'} _2
&=&
\suml_{K \subset J}\a(j-|K|,N) (-1)^{|K|} \suml_{i\in J/K } \margetracej{K}{\FF}{J} \dots\nn\\
&&
\dots C_{i,j+1}\sumls{K'\subset (J/K)\cup \{j+1\}\\i,j+1\in K'}
\margetracej{K'}{\FF}{(J/K)\cup \{j+1\}}E_{((J/K)\cup \{j+1\})/K'}\nn\\
&=&
\suml_{K \subset J}\a(j-|K|,N) (-1)^{|K|} \suml_{i\in J/K } \margetracej{K}{\FF}{J} \dots\nn\\
&&
\dots C_{i,j+1}\sumls{K''\subset (J/K)/\{i\}}
\margetracej{K''\cup\{i,j+1\}}{\FF}{(J/K)\cup \{j+1\}}E_{(J/K)/(K''\cup\{i\})}\nn\\
&=&
\suml_{K \subset J/\{i\}} \a(j-|K|,N)(-1)^{|K|} \suml_{i\in J/K } \margetracej{K}{\FF}{J} \dots\nn\\
&&
\dots C_{i,j+1}\sumls{K''\subset (J/\{i\})/K}
\margetracej{K''\cup\{i,j+1\}}{\FF}{(J/K)\cup \{j+1\}}E_{(J/K)/(K''\cup\{i\})}\nn\\
&=&
 \suml_{K \subset J/\{i\}}\a(j-|K|,N) (-1)^{|K|} \suml_{i\in J } \margetracej{K}{\FF}{J} \dots\nn\\
&&
\dots\sumls{K''\subset (J/\{i\})/K}
\margetracej{K''}{\FF}{(J/K)}
 C_{i,j+1}\margetracej{\{i,j+1\}}{\FF}{((J/K)/K'')\cup \{j+1\}}
 E_{(J/K)/(K''\cup\{i\})}\nn\\
 &=&
 \suml_{i\in J }\suml_{K \subset J/\{i\}}\a(j-|K|,N) (-1)^{|K|}  \margetracej{K}{\FF}{J} \sumls{K''\subset (J/\{i\})/K}
\margetracej{K''}{\FF}{(J/K)}\dots\nn\\
&&
\dots
 C_{i,j+1}\margetracej{\{i,j+1\}}{\FF}{((J/K)/K'')\cup \{j+1\}}
 E_{(J/K)/(K''\cup\{i\})}\nn\\
 &=&
 \suml_{i\in J }\suml_{L \subset J/\{i\}}(\sumls{K\subset L} \a(j-|K|,N)(-1)^{|K|})  \margetracej{L}{\FF}{J} 
\dots\nn\\
&&
\dots
 C_{i,j+1}\margetracej{\{i,j+1\}}{\FF}{((J/L)\cup \{j+1\}}
 E_{J/(L\cup\{i\})}\nn
 \eea
 \bea
 &=&
\a(j,N) \suml_{i\in J }
 C_{i,j+1}\margetracej{\{i,j+1\}}{\FF}{J\cup \{j+1\}}
 E_{J/\{i\}}\nn\\
 &&
 -\frac1N
 \suml_{i\neq l\in J }
 \margetracej{\{l\}}{\FF}{J} 
 C_{i,j+1}\margetracej{\{i,j+1\}}{\FF}{(J/\{l\})\cup \{j+1\}}
 E_{J/(\{i,l\})}\nn\\
 &=&
\a(j,N) \suml_{i\in J }
 \margetracej{\{i\}}{Q(\FF,\FF)}{J}
 E_{J/\{i\}}\nn\\
 &&
 -\frac1N
 \suml_{i\neq l\in J }
 C_{i,j+1}\margetracej{\{l\}}{\FF}{J\cup \{j+1\}}\margetracej{\{i,j+1\}}{\FF}{(J/\{l\})\cup \{j+1\}}
 E_{J/(\{i,l\})} 
 \nn\\
 &=&
\a(j,N) \suml_{i\in J }
 \margetracej{\{i\}}{Q(\FF,\FF)}{J}
 E_{J/\{i\}}
 -\frac1N
 \suml_{i\neq l\in J }
 C_{i,j+1}\margetracej{\{i,l,j+1\}}{\FF}{J\cup \{j+1\}}
 E_{J/(\{i,l\})} \nn
\eea
since 
$\suml_{\substack{K\subset L}}(-1)^{|K|}=\delta_{L,\emptyset}$. 
Note that there is a crucial compensation:
\bea\label{t1t12}
{\cal T}_1+{\cal T}^{i, j+1 \in K'} _2&=&-\frac jN \suml_{i\in J } 
\margetracej{\{i\}}{Q(\FF,\FF)}{J}E_{J/\{i\}}
\nn\\
&-&\frac1N
\suml_{i\neq l\in J } 
\margetracej{\{i\}}{Q(\FF,\FF)}{J}
\margetracej{\{l\}}{F}{J/\{i\}}
E_{J/\{i,l\}}.
\eea

 The computations of $\mathcal T_2^{i, j+1 \notin K'},\ \mathcal T_2^{i \in K' ,\ j+1 \notin  K'}
.\ \mathcal T_2^{i \notin K' , j+1 \in K'}$ go the same way and we omit it here (see \cite{psextenso} for a complete derivation).

We consider a similar dichotomy for the term
\bea
{\cal T} _3&:=&\frac 1{2N} \suml_{K \subset J} (-1)^{|K|} \margetracej{K}{\FF}{J} (\suml_{ i\neq r \in J/K } T_{i,r})\FF^N_{J/K }\nn\\
&=&\frac 1{2N} \suml_{K \subset J} (-1)^{|K|} \margetracej{K}{\FF}{J} (\suml_{ i\neq r \in J/K } T_{i,r})
\suml_{K' \subset J/K}  \margetracej{K'}{\FF}{J/K}E_{J/(K\cup K')}\;.
\nn
\eea
according, this time, to the cases 
  $i, r \in K'$,
 $i, r \notin K'$,
 $i \in K' , r \notin  K'$
and $i \notin K' , r \in  K'$. The computation of the different terms uses the same ``tricks" than for $\cal T_2$ and we omit them (see again \cite{psextenso}).

Finally, we obtain easily (see \cite{psextenso}) that
\bea\label{t4}
\mathcal T_4
&:=&
 \suml_{K \subset J} (-1)^{|K|} 
 \margetracej{K}{\FF}{J}(K^{J/K}\FF^N_{J/K})=
K^J
E_{J}.
\eea
Summing up all the contributions $\cal T_1,1=1,\dots,4$ (see \cite{psextenso}), we get \eqref{eqhieraerror1} after specializing 
to the case $J=\{1,\dots,j\}$, using \eqref{defejjj} and setting

\bea
D_j&:& \ \lone{j}\to\lone{j},\ j=\unN,\nn\\
&&E_j\mapsto\frac{N-j}N \suml_{ i\in J}  C_{i,j+1} \left(\margetracej{\{i\}}{\FF}{J\cup\{j+1\}}  \Phi_{{(J\cup\{j+1\})/\{i\}}}E_j
+ \margetracej{\{j+1\}}{\FF}{J\cup\{j+1\}}E_j\right)\;,\nn\\
&&\hskip 1.2cm
-\frac1N
 \suml_{i\neq l\in J }
 C_{i,j+1}\big( \margetracej{\{l\}}{\FF}{J\cup\{j+1\}}\Phi_{(J/\{l\})\cup \{j+1\}}E_{j}\big)\nn \\
D^1_j&:& \lone{(j+ 1)}\to\lone{j},\ j=1,\dots,N-1,\nn\\
&& E_{j+1}\mapsto
\frac{N-j}N C_{j+1} 
E_{j+1}\;,\nn\\ 
D^{-1}_j&:& \lone{(j- 1)}\to\lone{j}\ j=2,\dots,N,\nn\\
&& E_{j-1}\mapsto
\left(-\frac jN  \suml_{i\in J} \margetracej{\{i\}}{Q(\FF,\FF)}{J} + \frac 1{2N} \suml_{i,r \in J}  T_{i,r} \margetracej{\{i\}}{\FF}{J}\right) \Phi_{J/\{i\}}E_{j-1}\;, \nn\\ 
&&\hskip 1.5cm-
\frac1N
\suml_{i\neq l\in J }
\margetracej{\{l\}}{\FF}{J} 
C_{i,j+1} \margetracej{\{j+1\}}{\FF}{(J/\{l\})\cup\{j+1\}} 
\Phi_{J/\{l\}}E_{j-1} \nn\\
&&\hskip 1.5cm-
\frac1N
\suml_{i\neq l\in J }
\margetracej{\{l\}}{\FF}{J}
C_{i,j+1}\margetracej{\{i\}}{\FF}{(J/\{l\})\cup\{j+1\}}\Phi_{(J/\{i,l\})\cup \{j+1\}}E_{j-1}\nn \\
D^{-2}_j&:& \ \lone{(j-2)}\to\lone{j},\ j=3,\dots,N,\nn\\
&&E_{j-2}\mapsto
\frac 1{2N}  \suml_{i,s \in J}  T_{i,s} \margetracej{\{i\}}{\FF}{J} \margetracej{\{s\}}{\FF}{J/\{i\}} \Phi_{J/\{i,s\}}E_{j-2}\;\nn\\
&&\hskip 1.6cm 
-\frac1N
\suml_{i\neq l\in J } 
\margetracej{\{i\}}{Q(\FF,\FF)}{J}
\margetracej{\{l\}}{F}{J/\{i\}}
\Phi_{J/\{i,l\}}E_{j-2}.\label{newds}
\eea
where, by convention, 
\be
\label{E0}
\left\{
\begin{array}{l}
 D_N^1:=D_1^{-2}:=0\\
 D_1^{-1} \left(E_0\right):=-\frac 1N Q(F,F)\;,
\\
D_2^{-2}\left(E_0\right):= \frac 1 N  \big(T_{1,2} (F \otimes F) -  Q(F,F)\otimes 
F -  F\otimes Q(F,F) 
\big) \;.
\end{array}
\right.
\ee
Note that one has the following estimates:
\be\label{endow}
\| D_j \| , \| D^1_j\| \leq j  \mbox{ and }  \| D_j^{-1}\|,
\| D_j^{-2}\|, \|D^{-1}_1(E_0)\|,  \|D^{-2}_2(E_0)\|
 \leq \frac {j^2} N.
\ee

}
\end{appendix}

\end{document}